\documentclass[11pt]{article}

%packages
\usepackage{etex}
\usepackage{algorithmic, algorithm}
\usepackage{amsmath,amsfonts,amssymb,tikz}
\usepackage[margin=1in]{geometry}
\usepackage{amsthm} % includes the proof environment
\usepackage{csquotes}
\usepackage[all]{xy}
\usepackage[bottom]{footmisc}
\usepackage{enumitem}
\setlist{parsep = -0em, itemsep = 0.25em}
\usepackage{prettyref} % Use this package to refer to theorems, conjectures and etc
\usepackage{bbm}
\usepackage{hyperref}
\hypersetup{colorlinks}

% bib format
\usepackage[
backend=bibtex,
style=alphabetic,
citestyle=alphabetic,
maxnames=99,
date=year]{biblatex}
\AtEveryBibitem{%
    \clearlist{language}
    \clearlist{location}
    \clearfield{editor}
    \clearlist{editor}
    \clearfield{urlyear}
    \clearfield{urlmonth}
    \clearfield{note}
    \clearfield{url}
    \clearfield{urldate}
    \clearfield{date}
    \clearfield{doi}
    \clearfield{isbn}
    \clearfield{issn}
    \clearfield{series}
    \ifentrytype{inproceedings}{\clearlist{publisher}%
    }{}
}
\DeclareFieldFormat[inproceedings,article,inbook,misc,article,book]{title}{#1}
\renewbibmacro{in:}{}
\addbibresource{references.bib}

% environments
\newtheorem{theorem}{Theorem}[section]

\newtheorem{conj}[theorem]{Conjecture}
\newtheorem{lemma}[theorem]{Lemma}
\newtheorem{corollary}[theorem]{Corollary}

\newtheorem{claim}[theorem]{Claim}

\newtheorem{definition}[theorem]{Definition}
\newtheorem{remark}[theorem]{Remark}

\makeatletter
\newtheorem*{rep@theorem}{\rep@title}
\newcommand{\newreptheorem}[2]{%
\newenvironment{rep#1}[1]{%
\def\rep@title{#2 \ref{##1}}%
\begin{rep@theorem}}%
{\end{rep@theorem}}}
\makeatother

\newreptheorem{theorem}{Theorem}
\newreptheorem{lemma}{Lemma}
\newreptheorem{definition}{Definition}

% Proof Sketch
%
{\hspace*{\fill}$\Box$\par}
{\hspace*{\fill}$\Box$\par}

% PrettyRef environments
\newcommand{\pref}{\prettyref}
\newrefformat{lem}{Lemma \ref{#1}}
\newrefformat{cl}{Claim \ref{#1}}
\newrefformat{prop}{Proposition \ref{#1}}
\newrefformat{thm}{Theorem \ref{#1}}
\newrefformat{cor}{Corollary \ref{#1}}
\newrefformat{cha}{Chapter \ref{#1}}
\newrefformat{sec}{Section \ref{#1}}
\newrefformat{tab}{Table \ref{#1}}
\newrefformat{fig}{Figure \ref{#1}}
\newrefformat{conj}{Conjecture \ref{#1}}
\newrefformat{prot}{Protocol \ref{#1}}
\newrefformat{fact}{Fact \ref{#1}}
\newrefformat{def}{Definition \ref{#1}}
\newrefformat{prob}{Problem \ref{#1}}

% symbols

%\renewcommand{\S}{{\mathbb{S}}}

\newcommand{\eps}{\varepsilon}

\newcommand{\cA}{\mathcal{A}}
\newcommand{\cB}{\mathcal{B}}

\newcommand{\cM}{\mathcal{M}}

\newcommand{\cL}{\mathcal{L}}

\newcommand{\norm}[1]{\left\lVert#1\right\rVert}

% operators
\usepackage{xspace}

\newcommand{\poly}{{\operatorname{poly}\xspace}}

\newcommand{\Real}{\mathbb{R}}
\newcommand{\SVP}{\ensuremath{\mathsf{SVP}}}
\newcommand{\CVP}{\ensuremath{\mathsf{CVP}}}
\newcommand{\gSVPB}{\gamma \mbox{-} \SVP^{ \{ 0, 1 \} } }
\newcommand{\gCVPB}{\gamma \mbox{-} \CVP^{ \{ 0, 1 \} } }

\newcommand{\classP}{\ensuremath{\mathsf{P}}}
\newcommand{\NP}{\ensuremath{\mathsf{NP}}}
\newcommand{\coNP}{\ensuremath{\mathsf{coNP}}}

\newcommand{\DISJ}{\mathsf{DISJ}}

\newcommand{\EC}[1]{}

\newcommand{\red}[1]{{\color{red}[#1]}}
\renewcommand{\red}[1]{}

%% argmin operators defined 

\floatstyle{boxed}
\newfloat{Protocol}{H}{pro}

\title{Hardness of Approximate Nearest Neighbor Search \\ under L-infinity}

\author{Young Kun Ko\thanks{Courant Institute of Mathematical Sciences, New York University. Email: {\tt ykk254@nyu.edu}.} 
\and Min Jae Song\thanks{Courant Institute of Mathematical Sciences, New York University. Email: {\tt minjae.song@nyu.edu}. Research supported by the Simons Collaboration on Algorithms and Geometry and by the National Science Foundation (NSF) under Grant No.~CCF-1814524.}}

\date{\today}

\begin{document}
\maketitle
\begin{abstract}
    We show conditional hardness of Approximate Nearest Neighbor Search (ANN) under the $\ell_\infty$ norm with two simple reductions. Our first reduction shows that hardness of a special case of the Shortest Vector Problem (SVP), which captures many provably hard instances of SVP, implies a lower bound for ANN with polynomial preprocessing time under the same norm. Combined with a recent quantitative hardness result on SVP under $\ell_\infty$ (Bennett et al.,~FOCS 2017), our reduction implies that finding a $(1+\varepsilon)$-approximate nearest neighbor under $\ell_\infty$ with polynomial preprocessing requires near-linear query time, unless the Strong Exponential Time Hypothesis (SETH) is false. This complements the results of Rubinstein (STOC 2018), who showed hardness of ANN under $\ell_1$, $\ell_2$, and edit distance.
    
    Further improving the approximation factor for hardness, we show that, assuming SETH, near-linear query time is required for any approximation factor less than $3$ under $\ell_\infty$. This shows a conditional separation between ANN under the $\ell_1/ \ell_2$ norm and the $\ell_\infty$ norm since there are sublinear time algorithms achieving better than $3$-approximation for the $\ell_1$ and $\ell_2$ norm. Lastly, we show that the approximation factor of $3$ is a barrier for any naive gadget reduction from the Orthogonal Vectors problem.
    
\end{abstract}

\newpage
\section{Introduction}

Nearest Neighbor Search is formally defined as the following problem: Given a set of points $P = \{p_1, \ldots, p_N \}\subset \Real^d$, and a query point $t \in \Real^d$, find a point $p \in P$ that is closest to $t$. This is a fundamental algorithmic problem that has numerous applications in machine learning, computer vision, and databases (See~\cite{shakhnarovich_nearest-neighbor_2006}~and references therein for applications). % todo fill citation

A naive solution is to enumerate all points in $p \in P$, compare the distance with $t \in \Real^d$, and return $p$ that attains the minimum. This naturally gives a linear time ($O(N)$ where $ N = |P|$) algorithm assuming one can compute the distance between two points in $O(1)$-time. Unfortunately, for a very large $N$, which is the usual setting, this is costly. Ideally, we would like to return the solution without looking at every point in $P$. Another naive solution on the other extreme is to enumerate all possible query points $t$ and, for each $t$, store its nearest neighbor in $P$ in a look-up table. That way, we can handle each query $t$ using only $O(1)$ query time. However, this would require almost exponential (in dimension $d$) preprocessing time as one needs to enumerate through all possible $t$ and compute the corresponding nearest neighbor. Furthermore, this requires a huge space overhead since one needs to store all the corresponding solutions for each $t$.

A natural question is, can we obtain efficient preprocessing time (say polynomial in $N$ and $d$) and efficient query time (sublinear in $N$) simultaneously. When $d = O(1)$, such algorithms are known. If $d = 1$, for instance, one can construct a binary search tree with $P$ and determine the closest point to any query $t \in \Real$ in $O( \log N)$ time. For arbitrary $d$, the Voronoi diagram gives a solution using $N^{O(d)}$ space with $(d + \log N)^{O(1)}$ query time, but it is not at all clear if this can be further improved without paying an exponential price in $d$ (this is typically called the curse of dimensionality).

Instead, if we are satisfied with an \emph{approximate} nearest neighbor rather than an \emph{exact} nearest neighbor, i.e., find $\tilde{p} \in P$ such that $d(t,\tilde{p}) \leq \gamma \cdot \min_{p \in P} d(t,p)$, we can do much better. We refer to such a problem as $\gamma$-approximate nearest neighbor search (or $\gamma$-ANN for short). Previous works have shown that for many interesting metrics, we can indeed achieve improvements. For various distance metrics such as $\ell_1, \ell_2$, and edit distance, upper and lower bounds are well known. For $\ell_1$ (Hamming or Manhattan) and $\ell_2$ (Euclidean), one can use dimensionality reduction, locality sensitive hashing, or recently developed data-dependent hashing techniques~\cite{andoni_optimal_2015} to improve upon the naive linear scan. Under these norms, current state-of-the-art techniques for worst-case data achieve $\gamma$-approximation using $O ( d N^{1+\rho} )$ preprocessing time and $O( d N^\rho )$ query time where $\rho = \frac{1}{2\gamma^2 - 1}$. Furthermore, these are known to be optimal for hashing based techniques~\cite{andoni_optimal_2017}. We refer the reader to the survey by Andoni et al.~\cite{andoni_approximate_2018} for further information.

Even if one uses non-hashing based techniques for these norms, one cannot hope to indefinitely improve the query time's dependence on $\eps$ for $(1+\eps)$-approximation unless one refutes the Strong Exponential Time Hypothesis (SETH). More precisely, for any $\delta > 0$, there exists a constant $\eps = \eps(\delta) > 0$ such that $(1+\eps)$-ANN under $\ell_1, \ell_2$, and edit distance requires $N^{1-\delta}$ query time with polynomial preprocessing time unless SETH is false~\cite{rubinstein_hardness_2018}. Yet, unlike $\ell_1$ or $\ell_2$, $\ell_\infty$ remains a mystery as we have not seen any improvement on either the upper or lower bound front for the past decade~\cite{indyk_approximate_2001,andoni_hardness_2008, panigrahy_lower_2010}. In~\pref{sec:previouswork}, we elaborate on the current status of ANN under $\ell_\infty$. Moreover, we give a brief overview of the Shortest Vector Problem (\SVP), a fundamental lattice problem central to post-quantum cryptography, and its connection to ANN.

\subsection{Previous Works} \label{sec:previouswork}

\paragraph{Approximate nearest neighbors (ANN) in $\ell_\infty$.}

Whether or not we can achieve a better time-space tradeoff for $\ell_\infty$ is not just a purely intellectual endeavor, but a practical one as well. For instance, if the coordinates are heterogeneous, adding up different coordinates may not make sense as it would be ``comparing apples to oranges''. To circumvent this difficulty, one can convert each coordinate to a rank space and use the maximum rank difference as the distance measure~\cite{fagin_combining_1996,fagin_combining_1999}. Another motivation for studying $\ell_\infty$ comes from the fact that it can be used as a target space for embedding any general normed spaces.
(e.g., one can embed any metric space into $\ell_\infty$ in $O(c N^{1/c} \log N)$ dimensions with $2c-1$ distortion~\cite{matousek_distortion_1996}). Then, an efficient algorithm for ANN under $\ell_\infty$ can be used as a black box to give efficient algorithms for other norms, as achieved in \cite{andoni_approximate_2017}. Therefore, if one can design a better algorithm for ANN under $\ell_\infty$, algorithms for other norms may see improvements as well. 

From a technical perspective, $\ell_\infty$ stands unique compared to other $\ell_p$ norms. While there has been fruitful algorithmic progress for other $\ell_p$ norms by using various techniques such as 
dimensionality reduction~\cite{johnson_extensions_1984, kleinberg_two_1997}, locality sensitive hashing~\cite{andoni_near-optimal_2008} and data-dependent hashing~\cite{andoni_optimal_2017}, $\ell_\infty$ remains resistant to these techniques. The seemingly unorthodox upper bound devised by Indyk~\cite{indyk_approximate_2001} almost two decades ago remains state-of-the-art. This algorithm achieves $O( \log_{\rho} \log d)$-approximation with $O(d \poly \log N)$ query time with $O(d N^{\rho} \poly \log N)$ preprocessing time for any $\rho > 1$. Surprisingly, this regime is known to be tight under some restricted computation model (decision tree model) \cite{andoni_hardness_2008, panigrahy_lower_2010, braverman_semi-direct_2018}.

However, for a general computational model such as word-RAM,\footnote{Here we remark that the only known technique for word-RAM lower bounds uses the cell-probe model \cite{yao_complexity_1979}, in which computation is given for free and one only gets charged for information access. Proving any super-logarithmic query time lower bound for the cell-probe model (under any polynomial preprocessing time) would lead to a breakthrough in circuit complexity~\cite{dvir_static_2019}.} it is a major open problem whether the tradeoff given by Indyk's algorithm is optimal. 
Even if we allow a polynomial yet sublinear query time, it is open whether we can achieve a constant factor approximation with polynomial preprocessing time (and no exponential dependence on $\log d$).

\paragraph{Shortest vector problem (SVP).} \SVP~is a fundamental problem in lattice-based cryptography. For instance, the average-case hardness of learning with errors (LWE)~\cite{regev_lattices_2009}, which serves as the basis of many post-quantum cryptography proposals~\cite{alagic_status_2020}, is based on the worst-case hardness of approximating $\SVP$ up to polynomial factors in the $\ell_2$ norm. For finite $\ell_p$ norms, NP-hardness of exact $\SVP_p$, the Shortest Vector Problem in the $\ell_p$ norm, was shown by~\cite{ajtai_shortest_1998} using randomized reductions. This hardness result has been improved to hardness of constant factor approximations~\cite{micciancio_shortest_2001, khot_hardness_2005}, and $2^{(\log n)^{1-\eps}}$(\emph{almost} polynomial) factor approximations~\cite{khot_hardness_2005, haviv_tensor-based_2007} under the assumption that NP problems do not have randomized quasi-polynomial time algorithms. Since approximating \SVP~up to factors larger than $\sqrt{n}$ is in $\NP \cap \coNP$, it is unlikely that NP-hardness can be shown for this regime~\cite{aharonov_lattice_2005}. For further information on \SVP, we refer the reader to~\cite{micciancio_complexity_2002} and~\cite{khot_inapproximability_2010}.

In this work, we consider the special case of finding short vectors in the $\ell_\infty$ norm, which we denote by $\SVP_{\infty}$. This problem is interesting because of its relevance to practical lattice-based cryptosystems, in which the key size has to be as small as possible for efficiency while being large enough to rule out attacks from any conceivable adversaries. For instance, the practical security of Dilithium, a recent lattice-based signature scheme by~\cite{ducas_crystals-dilithium_2018}, is based on the intractability of solving $\SVP_{\infty}$. NP-hardness of exact $\SVP_{\infty}$ was shown by~\cite{boas_another_1981}, and~\cite{dinur_approximating_2002} showed hardness of approximation up to factor $n^{c/\log \log n}$. Recently,~\cite{bennett_quantitative_2017} proved that for any constant $\eps > 0$, there is a constant $\gamma_\eps \in (1,2)$ such that $\SVP_{\infty}$ does not have a $2^{(1-\eps)n}$ time algorithm achieving $\gamma_\eps$-approximation assuming SETH. We note that the best known provable algorithm for $\gamma$-approximate $\SVP_{\infty}$ runs in time $3^n \cdot \Big(\frac{\gamma}{\gamma-2}\Big)^n$~\cite{aggarwal_improved_2018}.

The relationship between nearest neighbor search and $\SVP$ has already been explored in previous works~\cite{becker_speeding-up_2015, becker_new_2016, laarhoven_sieving_2015}, but the focus of these works was on speeding up heuristic sieving algorithms for $\SVP$ using nearest neighbor search. More precisely, LSH-based ANN algorithms have been used as a subprocedure in sieving algorithms to solve $\SVP$ with better time-space tradeoffs under heuristic assumptions. In this work, we turn this relationship around and prove a near-linear query time lower bound for $\gamma$-ANN with polynomial preprocessing time using the aforementioned (conditional) lower bound on $\SVP_\infty$~\cite{bennett_quantitative_2017}.

\subsection{Our Results}
We give two separate reductions that show conditional hardness of ANN with polynomial preprocessing time under the $\ell_\infty$ norm. Our result demonstrates a separation between the $\ell_1/\ell_2$ norm and the $\ell_\infty$ norm. That is, while algorithms with polynomial preprocessing and sublinear query time for $(1+\eps)$-ANN, where $\eps > 0$ is an arbitrary constant and the data is in the high dimensional regime $d = O( \log N)$, exist for the $\ell_1$ and $\ell_2$ norm~\cite{indyk_approximate_1998,valiant_finding_2015,andoni_approximate_2018}, our second reduction (Corollary~\ref{cor:3approxann-informal}) shows that we cannot expect to have sublinear algorithms for $\gamma$-ANN under the $\ell_\infty$ norm for any $\gamma < 3$ in the high dimensional regime.

Our first result is the following theorem, which translates any hardness assumption on $\gamma$-$\SVP_p^{ \{ 0, 1 \} }$, which is the Shortest Vector Problem under the $\ell_p$ norm restricted to lattice vectors with $\{0,1\}$-coefficients in the given basis (See Section~\ref{sec:prelim} for a formal definition), into a hardness result (Corollary~\ref{cor:svp-ann-informal}) for $\gamma$-ANN under the $\ell_p$ norm. While this reduction leads to a weaker hardness result compared to Corollary~\ref{cor:3approxann-informal}, it establishes a simple connection between the two canonical problems which could potentially lead to fine-grained hardness results that are not based on SETH (See Section~\ref{sec:future}). We remark that an analogous reduction can be shown for a special case of the Closest Vector Problem (\CVP), which is the problem of computing the distance from some target point to the lattice (See Remark~\ref{rmk:cvp-reduction}).

\begin{theorem}[Informal] \label{thm:svp-informal}
For any $\eps > 0, \gamma > 1$ and $1 \le p \le \infty$, if there is no algorithm which solves $\gamma$-$\SVP_p^{ \{ 0, 1 \} }$ in $2^{(1-\eps) n}$ time, where $n$ is the rank of the given lattice, then there exists a $\delta = \delta(\eps) > 0$ such that $\gamma$-Approximate Nearest Neighbor under the $\ell_p$ norm cannot be solved with polynomial preprocessing and $N^{1 - \delta}$ query time.
\end{theorem}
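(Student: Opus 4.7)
\begin{proofsketch}
I would prove the contrapositive: given a $\gamma$-ANN algorithm with $N^c$ preprocessing and $N^{1-\delta}$ query time (for fixed constants $c \geq 1$ and $\delta > 0$), I will construct an algorithm solving $\gamma$-$\SVP_p^{ \{ 0, 1 \} }$ in $2^{(1-\eps) n}$ time for some $\eps = \eps(\delta) > 0$, which suffices.

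The heart of the plan is a meet-in-the-middle reduction parameterized by a split $\alpha \in (0,1)$ to be chosen at the end. Given a basis $B = (b_1, \ldots, b_n)$, I partition the coefficient vector $x \in \{0,1\}^n$ into $x_L \in \{0,1\}^{\alpha n}$ and $x_R \in \{0,1\}^{(1-\alpha) n}$, with corresponding column blocks $B_L, B_R$. Build the ANN database $P := \{B_L x_L : x_L \in \{0,1\}^{\alpha n}\}$ of size $N = 2^{\alpha n}$, and for each nonzero $x_R$ issue the query $t_{x_R} := -B_R x_R$. Since $\|B_L x_L - t_{x_R}\|_p = \|Bx\|_p$ for $x = (x_L, x_R)$, the minimum $\gamma$-ANN response taken over all $x_R \neq 0$ is a $\gamma$-approximation to the shortest $\{0,1\}$-combination with $x_R \neq 0$. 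The residual case $x_R = 0$ is the same problem on the sublattice spanned by $B_L$, which I would handle by recursion; the cost is absorbed as a lower-order term thanks to the geometric shrinkage of the rank.

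For the running time, I obtain preprocessing $N^c = 2^{c \alpha n}$ plus total query cost at most $2^{(1-\alpha) n} \cdot N^{1-\delta} = 2^{(1 - \alpha \delta) n}$. Balancing the two exponents by choosing $\alpha = 1/(c+\delta)$ yields
\[
\eps \;=\; \frac{\delta}{c + \delta} \;>\; 0,
\]
depending only on $\delta$ and on the algorithm's fixed preprocessing exponent $c$.

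The main obstacle I anticipate is exactly this interaction with $c$: the classical half-half meet-in-the-middle (i.e., $\alpha = 1/2$) only suffices when $c < 2$, so to absorb an arbitrary polynomial preprocessing one has to tilt $\alpha$ toward $0$, and the resulting $\eps$ must degrade gracefully with $c$. A secondary, minor hurdle is enforcing $x \neq 0$, which I would dispatch cleanly by excluding $x_R = 0$ from the queries and recursing on $B_L$ for that single residual case. The approximation guarantee is preserved end-to-end because each returned neighbor is within a factor of $\gamma$ of the true nearest database point, so the minimum of the returned distances is within $\gamma$ of the true shortest $\{0,1\}$-combination.
\end{proofsketch}
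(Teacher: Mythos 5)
Your proposal is correct and is built around the same meet-in-the-middle idea that drives the paper's proof, but the two are organized differently in ways worth noting. The paper factors the argument through an intermediate problem, $\gamma$-approximate Bichromatic Closest Pair: Lemma~\ref{lem:bichrom-ann} (a standard batching argument from~\cite{williams_subcubic_2018}) shows that $N^C$-preprocessing $N^{1-\delta}$-query ANN gives a subquadratic Bichromatic Closest Pair algorithm, and Lemma~\ref{lem:svp-bichrom} then splits the basis \emph{symmetrically} into two halves of rank $n/2$ and hands the two sets of $\{0,1\}$-combinations to the Bichromatic Closest Pair solver. You instead collapse these two steps into a single direct reduction to ANN and absorb the imbalance between the preprocessing exponent $c$ and query exponent $1-\delta$ by an \emph{asymmetric} split $\alpha n$ / $(1-\alpha)n$, balancing the exponents via the choice $\alpha = 1/(c+\delta)$; this is morally the same optimization that the batching parameter $\ell$ performs inside Lemma~\ref{lem:bichrom-ann}. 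The other genuine difference is how the $x \neq \vec{0}$ constraint is enforced: the paper does it combinatorially, defining $A_0, A_1, B_0, B_1$ with and without the zero coefficient vector and using the identity $(A_0 - B_1) \cup (A_1 - B_0) = \cL^{\{0,1\}}\setminus\{\vec{0}\}$, so that two Bichromatic Closest Pair calls cover exactly the nonzero combinations; you instead exclude $x_R = \vec 0$ from the queries and recurse on $B_L$, correctly observing that the geometric decay in rank makes the recursion a lower-order term. Both are sound. The paper's modular route buys reusability --- Lemma~\ref{lem:bichrom-ann} is invoked again verbatim in the Bichromatic Subset Query reduction (Theorem~\ref{thm:3approx}) --- and avoids recursion, while your direct route is more self-contained and arguably more transparent for this single statement. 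The resulting dependences $\eps = \delta/(c+\delta)$ (yours) versus $\eps = \Omega(\delta/(C-1))$ (the paper's, after composing Lemmas~\ref{lem:bichrom-ann} and~\ref{lem:svp-bichrom}) are qualitatively the same for the purposes of the informal theorem.
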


The key observation that connects this rather generic translation between $\SVP^{\{0,1\}}$~and ANN to a quantitative hardness result on ANN under the $\ell_\infty$ norm is that $k$-SAT reduces to this special subcase of $\SVP_{\infty}$ (with a small gap, in fact). The quantitative hardness of $\SVP^{\{0,1\}}_\infty$ based on SETH~\cite{bennett_quantitative_2017} (See Theorem~\ref{thm:svpseth}) implies the hardness of ANN under the $\ell_\infty$ norm.
\begin{corollary}[Informal]
\label{cor:svp-ann-informal}
Assuming SETH, for any $\delta > 0$, there exists a constant $\gamma \in (1,2)$ such that $\gamma$-ANN under the $\ell_\infty$ norm cannot be solved with polynomial preprocessing and $N^{1 - \delta}$ query time.
\end{corollary}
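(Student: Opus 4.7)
The plan is to directly compose Theorem~\ref{thm:svp-informal}, our generic reduction from $\gamma$-$\SVP_p^{\{0,1\}}$ to $\gamma$-ANN under $\ell_p$, with the SETH-based hardness of $\SVP_\infty^{\{0,1\}}$ (Theorem~\ref{thm:svpseth}). The key point of leverage is that the $\SVP_\infty$ reduction of Bennett et al., since it is seeded by $k$-SAT truth assignments, already lands inside the $\{0,1\}$-coefficient regime and so feeds Theorem~\ref{thm:svp-informal} directly.

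Given a target exponent $\delta > 0$, I would choose $\eps > 0$ small enough that the function $\delta(\cdot)$ supplied by Theorem~\ref{thm:svp-informal} at $p = \infty$ satisfies $\delta(\eps) \leq \delta$. Inspecting the proof of Theorem~\ref{thm:svp-informal}, one expects a rank-$n$ lattice to be encoded as an ANN instance of size $N = 2^{\Theta(n)}$, so that $\delta(\eps) = \Theta(\eps)$ and such a choice of $\eps$ is always possible. Invoking Theorem~\ref{thm:svpseth} with this $\eps$ yields a constant $\gamma_\eps \in (1,2)$ such that, under SETH, no $2^{(1-\eps)n}$-time algorithm solves $\gamma_\eps$-$\SVP_\infty^{\{0,1\}}$.

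Plugging this SVP hardness into Theorem~\ref{thm:svp-informal} at $p = \infty$ and $\gamma = \gamma_\eps$, I conclude that $\gamma_\eps$-ANN under $\ell_\infty$ cannot be solved with polynomial preprocessing and $N^{1-\delta(\eps)}$ query time. Since $\delta(\eps) \leq \delta$, i.e., $N^{1-\delta} \leq N^{1-\delta(\eps)}$, any hypothetical algorithm meeting the $N^{1-\delta}$ bound would a fortiori meet the $N^{1-\delta(\eps)}$ bound, contradicting Theorem~\ref{thm:svp-informal}. This $\gamma_\eps$ then witnesses the corollary. As the derivation is just a black-box composition, the only point I would explicitly verify is that $\delta(\eps)$ can be taken arbitrarily small as $\eps \to 0$, which should be immediate from tracing the $n \leftrightarrow N$ relation through the proof of Theorem~\ref{thm:svp-informal}; there is no genuine obstacle beyond this bookkeeping.
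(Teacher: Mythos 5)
Your proposal is correct and takes essentially the same route as the paper. The paper proves the formal version by composing its Theorem~\ref{thm:svp-informal} (realized via Lemma~\ref{lem:bichrom-ann} together with Lemma~\ref{lem:svp-bichrom}, which encodes a rank-$n$ $\gSVPB_\infty$ instance as a Bichromatic Closest Pair instance of size $N = 2^{n/2}$) with Theorem~\ref{thm:svpseth}, and then notes via the Sparsification Lemma that the hard instances live in dimension $d = O(\log N)$. Your ``choose $\eps$ small enough that $\delta(\eps)\le\delta$'' phrasing is the contrapositive-flipped version of the paper's ``given $\delta$ and the preprocessing exponent $C$, read off $\eps = \Omega(\delta/(C-1))$''; both are equivalent, though the paper's direction is slightly cleaner and, importantly, makes explicit that $\gamma$ must depend on the preprocessing exponent $C$ as well as $\delta$ (which is hidden by the word ``polynomial'' in the informal statement). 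Your guess that $N = 2^{\Theta(n)}$ and hence $\delta(\eps) = \Theta(\eps)$ for fixed $C$ is exactly right, and your observation that the Bennett et al.\ hard instances already have $\{0,1\}$ coefficients is the same fact the paper uses by citing their Corollary 6.7 in the form of Theorem~\ref{thm:svpseth}.
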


One caveat of Corollary~\ref{cor:svp-ann-informal} is that the precise relationship between the approximation factor and query time lower bound is not known (See Remark~\ref{rmk:svpapproxfactor}). Moreover, the approximation factor for which hardness can be shown is less than 2. A natural question is whether hardness can be shown for larger approximation factors. Further strengthening the hardness result based on SETH, we show that a modification of the reduction from online partial matching by Indyk~\cite{indyk_approximate_2001} demonstrates hardness with a larger gap of $\gamma = 3$ for Bichromatic Closest Pair under $\ell_\infty$.

\begin{theorem}[Informal]
\label{thm:3approxbichrom-informal}
Assuming SETH, for any $\delta > 0$ and $\gamma < 3$, there exists a constant $c > 0$ such that $\gamma$-approximate Bichromatic Closest Pair on instances with dimension $d = c \log N$ under the $\ell_\infty$ norm cannot be solved in $N^{2-\delta}$ time.
\end{theorem}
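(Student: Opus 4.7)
The plan is to reduce the Orthogonal Vectors problem (OV) to $\gamma$-approximate Bichromatic Closest Pair (BCP) under $\ell_\infty$ via a simple per-coordinate gadget whose induced gap saturates at $3$. By the standard SETH-to-OV reduction, for every $\delta > 0$ there is a constant $c > 0$ such that OV on inputs $A, B \subseteq \{0,1\}^d$ with $|A|=|B|=N$ and $d = c\log N$ cannot be solved in $N^{2-\delta}$ time, so any sub-$N^{2-\delta}$ algorithm for $\gamma$-BCP under $\ell_\infty$ in dimension $c\log N$ (for some $\gamma<3$) would yield the theorem.

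The embedding I would use is $\phi, \psi : \{0,1\}^d \to \R^d$ given by $\phi(a)_i = 2 a_i$ and $\psi(b)_i = 1 - 2 b_i$. A one-line case analysis of $|\phi(a)_i - \psi(b)_i|$ over the four possibilities for $(a_i, b_i)$ shows that this difference equals $3$ exactly in the conflict case $a_i = b_i = 1$ and equals $1$ in each of the three non-conflict cases. Taking the coordinate-wise maximum, $\|\phi(a) - \psi(b)\|_\infty = 1$ whenever $\langle a, b\rangle = 0$ and $\|\phi(a) - \psi(b)\|_\infty = 3$ whenever $\langle a, b\rangle \geq 1$. Consequently, the BCP optimum over $\phi(A) \times \psi(B)$ is $1$ in YES instances of OV and $3$ in NO instances.

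Running a $\gamma$-BCP solver on $(\phi(A), \psi(B)) \subseteq \R^d$ then returns a value in $[1, \gamma]$ in YES instances and a value in $[3, 3\gamma]$ in NO instances; since $\gamma < 3$ these ranges are disjoint, and a single threshold check decides OV. The reduction is linear-time and produces $2N$ points in dimension $d = c\log N$, so an $N^{2-\delta}$-time $\gamma$-BCP algorithm would give an $N^{2-\delta}$-time algorithm for OV, contradicting SETH.

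The main conceptual step is the design of the gadget: a ``symmetric'' per-coordinate encoding such as $a\mapsto a$ and $b\mapsto 1-b$ only attains gap $2$, and the triangle-inequality argument alluded to in the abstract shows that no per-coordinate gadget can exceed gap $3$. The trick is to offset the $A$-side and $B$-side value ranges so that the three orthogonality-consistent patterns $(a_i, b_i) \in \{(0,0),(0,1),(1,0)\}$ all produce the same small coordinate difference $1$, while only the conflict pattern $(1,1)$ attains the extremal difference $3$. Once the gadget is in hand, the rest of the argument is a routine invocation of SETH-hardness of OV.
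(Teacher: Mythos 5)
Your proposal is correct and essentially the same as the paper's: both reduce a SETH-hard pair-finding problem on binary vectors (you start from Orthogonal Vectors; the paper starts from the equivalent Bichromatic Subset Query, via Williams's reduction) to $\gamma$-approximate Bichromatic Closest Pair under $\ell_\infty$ by an asymmetric per-coordinate gadget that sends the three non-conflicting bit patterns to the same small coordinate distance while the unique conflict pattern is exactly three times farther. Your map $\phi(a)_i=2a_i,\ \psi(b)_i=1-2b_i$ is, up to complementing one side and scaling by $1/3$, identical to the paper's gadget $f\in\{0,2/3\},\ g\in\{1/3,1\}$, so the two arguments differ only cosmetically.
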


Since a lower bound for $\gamma$-approximate Bichromatic Closest Pair implies a lower bound for $\gamma$-ANN with polynomial preprocessing time~\cite{williams_subcubic_2018}, Theorem~\ref{thm:3approxbichrom-informal} implies the following corollary.

\begin{corollary}[Informal]
\label{cor:3approxann-informal}
Assuming SETH, for any $\delta > 0$ and $\gamma < 3$, there exists a constant $c > 0$ such that $\gamma$-ANN on instances with dimension $d = c \log N$ under the $\ell_\infty$ norm cannot be solved with polynomial preprocessing and $N^{1-\delta}$ query time.
\end{corollary}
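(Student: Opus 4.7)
The plan is to derive Corollary~\ref{cor:3approxann-informal} from Theorem~\ref{thm:3approxbichrom-informal} via the standard bucketed reduction from Bichromatic Closest Pair to ANN (see, e.g., \cite{williams_subcubic_2018}). Arguing the contrapositive, I will suppose that for some constants $C$ and $\delta > 0$ there exists a $\gamma$-ANN data structure on $\ell_\infty$ with preprocessing time $M^C$ and query time $M^{1-\delta}$ on $M$-point databases, and I will produce from it a $\gamma$-approximate Bichromatic Closest Pair algorithm running in time $N^{2-\delta^*}$ on $N$-point instances in an appropriate logarithmic dimension, contradicting Theorem~\ref{thm:3approxbichrom-informal}.

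The reduction itself is mechanical. Given $(A,B)$ with $|A|=|B|=N$, I partition $A$ into $N/m$ buckets $A_1,\ldots,A_{N/m}$ of size $m$ (with $m$ to be chosen), preprocess each bucket separately (total cost $N\cdot m^{C-1}$), and then for each $b\in B$ issue one ANN query against each of the $N/m$ buckets (total cost $N^2/m^{\delta}$). The minimum over buckets of the returned per-bucket $\gamma$-approximate nearest neighbors is itself a $\gamma$-approximate nearest neighbor of $b$ in the whole set $A$, which is enough to decide the $\gamma$-gap version of Bichromatic Closest Pair. Balancing the two cost terms by choosing $m = N^{1/(C-1+\delta)}$ yields total running time $N^{2-\delta^*}$ with $\delta^* := \delta/(C-1+\delta) > 0$.

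To close the loop, I would invoke Theorem~\ref{thm:3approxbichrom-informal} with parameters $(\delta^*,\gamma)$ to obtain a constant $c_B$ such that $\gamma$-Bichromatic Closest Pair in dimension $c_B\log N$ cannot be solved in time $N^{2-\delta^*}$ under SETH. The ANN is applied to buckets of size $m$ in the same dimension as the bichromatic instance, so matching the hypothetical ANN's dimension constant $c\log m$ to $c_B\log N$ forces $c = c_B\cdot(C-1+\delta)$; the hypothetical ANN then yields a subquadratic bichromatic algorithm and the desired contradiction follows.

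The only non-routine aspect is that the hard ANN dimension constant $c$ must be allowed to scale with the preprocessing exponent $C$, so the corollary is most naturally read as ruling out every polynomial preprocessing bound separately. Apart from this bookkeeping point the proof is a one-variable balance between preprocessing and query cost, with the approximation factor $\gamma$ preserved throughout because the minimum of per-bucket $\gamma$-approximations is still a $\gamma$-approximation over $A$.
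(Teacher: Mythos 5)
Your proposal is correct and matches the paper's approach: the paper derives this corollary from Theorem~\ref{thm:3approxbichrom-informal} via exactly the bucketed reduction you describe (this is Lemma~\ref{lem:bichrom-ann}, the Williams-style Bichromatic-Closest-Pair-to-ANN reduction), and the formal version (Theorem~\ref{thm:3approx}) likewise has $c = c(\delta, C)$, confirming your bookkeeping observation that the hard dimension constant must scale with the preprocessing exponent.
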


Furthermore, we show that any naive gadget reduction from Orthogonal Vectors (Definition~\ref{def:ov}), which essentially captures most known reduction techniques from SETH, to $\gamma$-approximate Bichromatic Closest Pair cannot show hardness for any approximation factor larger than 3 (Theorem~\ref{thm:barrier}). This essentially follows from the triangle inequality. We note that this limitation was mentioned in~\cite{rubinstein_hardness_2018} as well. Hence, showing any hardness with an approximation factor larger than 3 would require a novel technique bypassing the naive gadget reduction techniques. Therefore, our hardness result for ANN under $\ell_\infty$ is tight in this sense.

We note that the result of~\cite{rubinstein_hardness_2018}, who showed hardness under the $\ell_2$ norm, already implies the hardness of ANN under the $\ell_\infty$ norm since one can embed $\ell_2$ into a subspace of $\ell_\infty$ with low distortion~\cite{indyk_better_2003}. This reduction is similar to that of~\cite{regev_lattice_2006}, who showed that $\ell_2$ is, in a certain sense, the ``easiest norm'' for lattice problems. However, hardness under the $\ell_\infty$ norm through this reduction only holds for $d = \poly \log N$ (as opposed to $d=O (\log N)$ in our results) because of the embedding's dimension blow-up. Moreover, our reductions are simpler and do not require the distributed PCP~\cite{abboud_distributed_2017} machinery used in~\cite{rubinstein_hardness_2018}. Our Corollary~\ref{cor:3approxann-informal} is also stronger in the sense that it holds for larger (and known) approximation factors. The approximation factor for hardness in~\cite{rubinstein_hardness_2018} is exponentially small in $\delta$ and depends on an unspecified relation between the reduction parameters due to SETH.

\subsection{Future Directions}
\label{sec:future}

Our result shows that the hardness of ANN under $\ell_\infty$ for approximation factor 3 is the best one can hope for with current reduction techniques. The difficulty of proving hardness beyond $3$-approximation was hinted in~\cite{rubinstein_hardness_2018}, but no norm which explicitly instantiates this barrier was known previously.

An obvious next direction is determining whether hardness of $\gamma$-ANN under $\ell_\infty$ for $\gamma > 3$ can be shown, as it is widely believed~\cite{andoni_hardness_2008, panigrahy_lower_2010, braverman_semi-direct_2018} that the tradeoff between the required preprocessing/query time and approximation guarantee given by Indyk's algorithm~\cite{indyk_approximate_2001} is optimal.
% ~\red{What is the running time of Indyk's 3-approximation?} 
However, it is not clear whether such a result can be obtained from SETH. Our connection between $\gamma$-ANN and $\gSVPB$ suggests a direction for overcoming this technical barrier. Instead of basing hardness on SETH, consider the following conjecture on $\SVP_\infty^{ \{ 0, 1 \} }$.

\begin{conj}  \label{conj:omega1}
For any $\eps \in (0, 1/2)$, there exists $\gamma = \omega(1)$ such that there is no $2^{(1-\eps)n}$ time algorithm for $\gamma$-$\SVP_\infty^{ \{ 0, 1 \} }$.
\end{conj}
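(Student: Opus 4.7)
The plan is to start from the SETH-based hardness of $\gamma_\eps$-$\SVP_\infty^{\{0,1\}}$ due to Bennett et al.\ (Theorem~\ref{thm:svpseth}), which gives only a constant gap $\gamma_\eps \in (1,2)$, and try to boost the gap to a super-constant function $\gamma(n) = \omega(1)$ while paying at most an $o(n)$ additive blowup in dimension, so that the $2^{(1-\eps)n}$ time lower bound survives (possibly after relabeling $\eps$). The first attempt would be a product-style gap amplification tailored to $\ell_\infty$: since $\|u \otimes v\|_\infty = \|u\|_\infty \cdot \|v\|_\infty$, tensoring a base instance $k(n)$ times raises the gap from $\gamma_\eps$ to $\gamma_\eps^{k(n)}$, so taking $k(n) = \omega(1)$ slowly growing (e.g., $k(n) = \log\log n$) already yields the desired super-constant gap in principle.

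The real work is controlling the dimension. Plain tensoring sends rank $n$ to rank $n^{k(n)}$, collapsing a $2^{(1-\eps)n}$ bound into the much weaker $2^{(1-\eps) N^{1/k(n)}}$ on the amplified instance of rank $N$. To preserve SETH-quantitative hardness I would try a block-structured, nearly dimension-preserving amplification: glue together $k(n)$ independent Bennett-style sub-instances of rank $n/k(n)$ and introduce a small shared penalty component that forces any nonzero $\{0,1\}$-combination to be simultaneously nontrivial on many blocks, so that the $\ell_\infty$ norm (being a maximum) scales with the number of active blocks rather than with a single one. I would then verify that the construction stays inside the $\{0,1\}$-coefficient regime, since this restriction is essential both for the original $k\mathsf{SAT}$ encoding and for the downstream application to ANN via Theorem~\ref{thm:svp-informal}.

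The main obstacle, and the reason this remains a conjecture, is precisely the tension between gap amplification and dimension preservation for $\ell_\infty$ lattices with binary coefficients. Haviv--Regev~\cite{haviv_tensor-based_2007} tensoring and Dinur~\cite{dinur_approximating_2002}'s $n^{c/\log\log n}$-hardness for $\SVP_\infty$ both incur polynomial dimension blowup, which yields only quasi-polynomial time hardness rather than the $2^{(1-\eps)n}$ regime the conjecture demands. A more promising angle may be to avoid amplification entirely and instead exploit the quantitative form of SETH directly: since SETH quantifies over all constant $k$ in $k\mathsf{SAT}$, one could take $k = k(n)$ growing slowly and track what gap falls out of the $k\mathsf{SAT} \to \SVP_\infty^{\{0,1\}}$ reduction used by Bennett et al. Whether the existing gadget, which produces gap $\gamma_\eps < 2$ from constant-$k$ instances, can be refined so that the gap grows with $k$ while the rank of the output lattice remains $n(1+o(1))$ is, I expect, the heart of any serious attack on this conjecture.
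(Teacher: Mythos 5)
The statement you were given is not a theorem but an open conjecture (it appears in the Future Directions section, Conjecture~\ref{conj:omega1}); the paper supplies no proof of it, and in fact cautions that proving it from SETH would have to circumvent the $3$-approximation barrier established in Theorem~\ref{thm:barrier}. You correctly recognized this and, rather than forcing a spurious proof, wrote a research assessment of possible attacks and their obstacles. That assessment is largely accurate: the base hardness from \cite{bennett_quantitative_2017} does give only $\gamma_\eps \in (1,2)$, tensoring does multiply $\ell_\infty$ gaps but blows the rank up from $n$ to $n^{k}$, and the Haviv--Regev and Dinur-style amplifications indeed incur polynomial dimension overhead and hence fall out of the $2^{(1-\eps)n}$ regime.

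One point worth sharpening: your ``grow $k$ in $k$\textsf{SAT}'' angle is actually working against the existing gadget, not merely unsupported by it. By Remark~\ref{rmk:svpapproxfactor}, the Bennett et al.\ reduction yields $\gamma = 1 + 2/(k-1)$, which \emph{decreases} toward $1$ as $k$ grows; SETH forces $k \to \infty$ as $\eps \to 0$, so pushing $k$ up makes the obtained gap collapse rather than grow. Any serious attempt along that line therefore needs a qualitatively new gadget whose gap is increasing in $k$, not a refinement of the current one. It is also worth flagging explicitly what the paper itself observes: since Lemmas~\ref{lem:bichrom-ann} and~\ref{lem:svp-bichrom} would convert a proof of Conjecture~\ref{conj:omega1} into SETH-hardness of $\omega(1)$-ANN under $\ell_\infty$, and Theorem~\ref{thm:barrier} shows that naive OV-based reductions cannot beat factor $3$, a SETH-based proof of the conjecture would have to bypass the standard OV reduction template entirely. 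This is strong evidence that the conjecture, if true, is likely not a consequence of SETH via known techniques, and should probably be treated as an independent hardness assumption---which is exactly the stance the paper takes.
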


It is straightforward to see that Lemma~\ref{lem:bichrom-ann} and~\ref{lem:svp-bichrom} (formal version of \pref{thm:svp-informal}) together with \pref{conj:omega1} would imply hardness of $\omega(1)$-ANN under $\ell_\infty$. We remark that almost all instances arising in NP-hardness proofs of $\gamma$-$\SVP_\infty$ are indeed captured by $\SVP_\infty^{ \{ 0, 1 \} }$~\cite{micciancio_complexity_2002,dinur_approximating_2002,bennett_quantitative_2017}.\footnote{ \cite{dinur_approximating_2002} generates a lattice in which coefficients of the short lattice vector take values in $\{-1,0,1\}$ instead of $\{ 0, 1 \}$. Still, there are only $3^n$ many candidates for the short vector, so our reduction applies to this case as well with minor modifications.}

One potential direction is to prove \pref{conj:omega1} assuming SETH. However, this would then overcome the barrier of $3$-approximation for ANN, suggesting that such a result would be difficult to obtain. Another interesting direction is to show the quantitative hardness of other computational problems assuming \pref{conj:omega1} since $\SVP^{\{0,1\}}_\infty$ is itself a natural problem which encapsulates the difficulty of canonical problems such as $k$-SAT and Subset Sum. Additional motivation for this direction is given by~\cite{bennett_quantitative_2017, aggarwal_fine-grained_2021}, who remark that the concrete lower bound for general $\SVP_\infty$ may in fact be $3^n$ since the kissing number in the $\ell_\infty$ norm is $3^n-1$. Since it is not at all clear whether a $3^n$ lower bound for $\SVP_\infty$ can be deduced from SETH, it may make sense to directly assume the quantitative hardness of $\SVP_\infty$ or $\SVP^{\{0,1\}}_\infty$.

\paragraph{Acknowledgments.}We thank Noah Stephens-Davidowitz and Oded Regev for helpful comments.

\section{Preliminaries}
\label{sec:prelim}
We present two reductions that show hardness of $\gamma$-ANN. Our first reduction reduces a special case of the Shortest Vector Problem to $\gamma$-ANN. Our second reduction reduces the Subset Query problem to $\gamma$-ANN. Both reductions involve an intermediate problem, referred to as the Bichromatic Closest Pair problem in~\cite{rubinstein_hardness_2018}. In this section, we present formal definitions of these problems and their lower bounds assuming SETH.

\subsection{Complexity Assumptions}

\begin{definition}[Strong Exponential Time Hypothesis~\cite{impagliazzo_complexity_1999}] For any $\eps > 0$, there exists $k = k(\eps)$ such that $k$\emph{-SAT} on $n$ variables cannot be solved in $2^{(1-\eps)n}$ time.
\end{definition}

Orthogonal Vectors (OV) is an intermediate problem commonly used to show fine-grained hardness results in \classP~\cite{williams_new_2005, williams_algorithms_2007}. Note that SETH implies the Orthogonal Vectors Conjecture, which states that Orthogonal Vectors cannot be solved in subquadratic time. 
For recent developments in fine-grained complexity assuming the Orthogonal Vectors Conjecture, we refer the reader to the survey~\cite{williams_fine-grained_2017} and references therein.

\begin{definition}[Orthogonal Vectors] 
\label{def:ov}Given two sets $A, B \subset \{0,1\}^d$, decide whether there exists a pair $(a,b) \in A \times B$ such that $\langle a , b\rangle = 0$.
\end{definition}

\begin{conj}[Orthogonal Vectors Conjecture]
\label{conj:ovc}
For every $\delta > 0$, there exists a constant $c = c(\delta)$ such that given two sets $A, B \subset \{0,1\}^d$ each of cardinality $N$, deciding if there is a pair $(a,b) \in A \times B$ such that $\langle a , b\rangle = 0$ cannot be solved in $O(N^{2-\delta})$ time on instances with $d = c \log N$.
\end{conj}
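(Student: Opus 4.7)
The final statement is the Orthogonal Vectors Conjecture (\pref{conj:ovc}), which the paper presents as a hypothesis, not as a theorem. No unconditional proof of this statement exists in the literature: an $N^{2-o(1)}$ lower bound for a concrete problem in $\NP$ on the uniform word-RAM is beyond current techniques. So a genuine ``proof proposal'' for the statement exactly as written is necessarily speculative, and what I can offer is a blueprint of candidate attacks together with the barrier each one hits.

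The plan would be to rule out any $O(N^{2-\delta})$ algorithm on instances with $d = c(\delta) \log N$ by a direct information-theoretic argument. Concretely, I would first try a cell-probe route: fix a hard distribution on $(A,B)$ with $|A|=|B|=N$, and show that any decision procedure reading $o(N^{2-\delta})$ cells cannot distinguish the ``all pairs non-orthogonal'' case from the planted case with one orthogonal pair. A second, complementary, route is via communication complexity: embed two-party randomized set-disjointness on $d$ bits (which has an $\Omega(d)$ lower bound) into OV in a way that simulates a hypothetical $O(N^{2-\delta})$-time algorithm as a sub-linear-communication protocol. A third, algebraic, route is to represent the OV predicate as evaluating a low-degree multilinear polynomial over $\mathbb{F}_2$ on $N$ pairs of inputs and to push matrix-rigidity or Ne\v{c}iporuk-style arguments to force $N^{2-o(1)}$ operations.

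Each of these routes runs into a well-known barrier. Super-polylogarithmic cell-probe lower bounds under polynomial preprocessing are themselves open for essentially every natural problem, and even a modest improvement would already entail circuit lower bounds (as the paper notes in its footnote via~\cite{dvir_static_2019}); converting communication lower bounds into \emph{uniform} time lower bounds is precisely the gap that distributed-PCP machinery such as~\cite{abboud_distributed_2017,rubinstein_hardness_2018} is built to widen, but no unconditional translation yielding $N^{2-o(1)}$ time is known; and any algebraic approach over general Boolean circuits is blocked by natural proofs, with the strongest explicit bounds still sub-quadratic by a large margin. The main obstacle, therefore, is not a local technical step but a global one: we possess no technique that produces super-linear, let alone near-quadratic, time lower bounds for a natural $\NP$ problem in an unrestricted model. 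This is exactly why \pref{conj:ovc} is stated as a conjecture and why the fine-grained complexity literature uses it (or its strengthening SETH) as an axiom rather than a theorem. Absent a fundamentally new technique that bypasses both relativization and natural proofs, I do not see a concrete path to an unconditional proof of the statement as worded.
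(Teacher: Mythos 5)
You are right that this statement is a hypothesis, not a result: the paper offers no proof of \pref{conj:ovc}, stating it as a conjecture and only remarking that SETH implies it (via Williams' split-and-list reduction with sparsification, the same machinery behind \pref{thm:subsetseth}), so treating it as an axiom with no unconditional proof available is exactly the paper's stance. The only thing you could have added is a sketch of that conditional SETH-to-OV derivation, which you allude to but do not spell out; as an unconditional claim the statement is open, and your assessment of the barriers is accurate.
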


\subsection{Nearest Neighbor Search Problems}
\begin{definition}[Approximate Nearest Neighbor Search]
\label{def:online}
The $\gamma$-Approximate Nearest Neighbor Search problem ($\gamma$-ANN) under the $\ell_p$ norm is defined as follows. Let $A \subset \Real^d$ be a set of $N$ vectors. Given numbers $\gamma > 1$, $r > 0$, and a vector $x \in \Real^d$, distinguish between the following two cases after preprocessing $A$:
 \begin{itemize}
    \item (YES) There exists $a \in A$ such that with $\norm{a-x}_p \le r$.
    \item (NO) For all $a \in A$, $\norm{a-x}_p \ge \gamma r$.
\end{itemize}
\end{definition}

\begin{definition}[Approximate Bichromatic Closest Pair]
\label{def:offline}
The $\gamma$-approximate Bichromatic Closest Pair problem under the $\ell_p$ norm is defined as follows. Let $A,B \subset \Real^d$ each be a set of size $N$. Given numbers $\gamma > 1$ and $r > 0$, distinguish between the following two cases:
 \begin{itemize}
        \item (YES) There exists $a \in A$ and $b \in B$ with $\norm{a-b}_p \le r$.
        \item (NO) For all $a \in A$ and $b \in B$, $\norm{a-b}_p \ge \gamma r$.
    \end{itemize}
\end{definition}

A standard technique~\cite{williams_subcubic_2018} shows that any $\gamma$-ANN algorithm with sublinear query time and polynomial preprocessing time implies a subquadratic algorithm for $\gamma$-approximate Bichromatic Closest Pair (See~\pref{lem:bichrom-ann}).

\subsection{Lattice Problems}
A lattice $\cL$ is defined as the set of all integer combinations of linearly independent basis vectors  $b_1,\ldots,b_n \in \Real^d$, i.e.,
\begin{align*}
    \cL = \cL(b_1,\ldots,b_n) = \left\{\sum_{i=1}^n \alpha_i b_i \mid \alpha_i \in \mathbb{Z}  \right\} \;.
\end{align*}

We denote the rank of $\cL$ by $n$ and the ambient dimension by $d$.

\begin{definition}[Shortest Vector Problem]
	The $\gamma$-approximate Shortest Vector Problem under the $\ell_p$ norm ($\gamma$-$\SVP_p$) is defined as follows. Let $\cL$ be a $d$-dimensional lattice of rank $n$ given in the form of a basis $\cB = ( b_1, \ldots , b_n )$, where $\cB \subset \Real^d$. Given numbers $\gamma > 1$ and $r > 0$, distinguish between the following two cases:
	\begin{itemize}
	\item (YES) There exists a non-zero vector $\vec{\alpha} \in \mathbb{Z}^n$ such that 
	\begin{equation*}
	\norm{ \sum_{i=1}^n \alpha_i b_i }_p \leq r	\;.
	\end{equation*}
	\item (NO) For any non-zero $\vec{\alpha} \in \mathbb{Z}^n$,
	\begin{equation*}
	\norm{ \sum_{i=1}^n \alpha_i b_i }_p \ge \gamma r \;.
	\end{equation*}
	\end{itemize}
\end{definition}

\begin{definition}[Shortest Vector Problem with Binary Restriction]
\label{def:svpbool}
	The $\gamma$-approximate Shortest Vector Problem with Binary Restriction under the $\ell_p$ norm ($\gamma$-$\SVP_p^{ \{ 0, 1 \} }$) is defined as follows. Let $\cL$ be a $d$-dimensional lattice of rank $n$ given in the form of a basis $\cB = ( b_1, \ldots , b_n )$, where $\cB \subset \Real^d$. Given numbers $\gamma > 1$ and $r > 0$, distinguish between the following two cases:
	\begin{itemize}
	\item (YES) There exists a non-zero vector $\vec{\alpha} \in \{0,1\}^n$ such that 
	\begin{equation*}
	\norm{ \sum_{i=1}^n \alpha_i b_i }_p \leq r	\;.
	\end{equation*}
	\item (NO) For any non-zero $\vec{\alpha} \in \mathbb{Z}^n$,
	\begin{equation*}
	\norm{ \sum_{i=1}^n \alpha_i b_i }_p \ge \gamma r \;.
	\end{equation*}
	\end{itemize}
\end{definition}

\begin{theorem}[{\cite[Corollary 6.7]{bennett_quantitative_2017}}]
\label{thm:svpseth}
Assuming SETH, for any $\eps > 0$, there exists $\gamma=\gamma(\eps)$ such that $\gamma$-$\SVP_\infty^{ \{ 0, 1 \} }$ cannot be solved in $2^{(1-\eps)n}$ time, where $n$ is the rank of the given lattice.
\end{theorem}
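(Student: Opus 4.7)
The plan is to give a fine-grained, rank-preserving reduction from $k$-SAT on $n$ variables to $\gamma$-$\SVP_\infty^{ \{ 0, 1 \} }$ on a lattice of rank $n' = (1+o(1))n$. Given such a reduction, SETH (which asserts that for every $\eps' > 0$ there is some $k$ for which $k$-SAT on $n$ variables requires $2^{(1-\eps')n}$ time) transfers directly to an exponential lower bound for the lattice problem, up to a reparametrization of $\eps$.

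For the reduction, given a $k$-SAT instance $\phi$ on $n$ variables and $m=\poly(n)$ clauses, I would build a basis $b_1,\dots,b_n \in \mathbb{Z}^{n+m}$ with two blocks of coordinates. The first block is a scaled identity, $b_i[i'] = M\cdot\delta_{ii'}$ for a large multiplier $M$, which makes the coefficients readable off the first $n$ coordinates and, crucially, forces any combination with $|\alpha_i|\ge 2$ to have $\ell_\infty$ norm at least $2M$. The second block encodes the clauses: set $b_i[n+j] \in \{+1,-1,0\}$ according to whether $x_i$ appears positively, negatively, or not at all in $C_j$. A satisfying $\vec{\alpha}^* \in \{0,1\}^n$ then gives a lattice vector whose clause coordinates have absolute value at most $k$, so taking $M$ on the order of $k$ yields a short vector of $\ell_\infty$ norm at most $r := M$. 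Conversely, if $\phi$ is unsatisfiable then for every $\vec{\alpha}\in\{0,1\}^n$ some clause is violated, which forces the corresponding clause coordinate to equal $-|N_j|$, where $N_j$ is the set of negatively appearing variables of that clause.

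To turn this qualitative difference into a genuine constant gap $\gamma > 1$ and to simultaneously handle \emph{all} nonzero integer coefficients (not just $\{0,1\}$ ones), I would add two ingredients. First, preprocess and pad each clause so that it contains enough negative literals, then amplify the gap by tensoring the gadget a constant number of times, so that an unsatisfied clause contributes magnitude at least $\gamma\cdot r$ while satisfied clauses still contribute at most $r$; the resulting constant blow-up in the rank can be absorbed into the slack $\eps$ coming from SETH. Second, break the symmetry between $\alpha_i=+1$ and $\alpha_i=-1$ using asymmetric shift gadgets, so that $\vec{\alpha} \in \{-1,0,1\}^n$ not corresponding to a legal $\{0,1\}$ assignment is ruled out as well. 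Combined with the identity block's handling of $|\alpha_i|\ge 2$, this ensures that no non-satisfying integer combination yields a short vector.

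The main obstacle is preserving the rank \emph{tightly}. Any gap-amplification or gadget addition that blows up $n'$ by more than a $(1+o(1))$ factor would spoil the fine-grained claim, while PCP-style amplifications are far too wasteful and even naive van Emde Boas--type reductions already pay constant-factor overhead. Controlling the tradeoff between the achieved gap $\gamma(\eps)$ and the additive rank overhead, all while ruling out short integer combinations that do not come from assignments, is where I expect the main technical difficulty to lie.
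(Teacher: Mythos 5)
This theorem is not proved in the paper; it is imported verbatim as~\cite[Corollary 6.7]{bennett_quantitative_2017}, so the relevant comparison is with the Bennett--Golovnev--Stephens-Davidowitz construction. Your overall skeleton (a scaled identity block to control coefficient magnitudes, plus a clause block with entries determined by how each variable appears in each clause, together with a rank-preserving account of the SETH exponent) matches theirs at a high level. Where your proposal goes wrong is the gap-amplification step.

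Your plan to ``amplify the gap by tensoring the gadget a constant number of times'' with only a constant rank overhead does not work. Tensoring a lattice (in the Haviv--Regev sense) squares the rank, which destroys the fine-grained claim; on the other hand, merely repeating a clause coordinate several times amplifies nothing under $\ell_\infty$, since the max over identical copies is the max over one copy. There is no intermediate gadget that multiplies the YES/NO ratio while only adding $O(1)$ to the rank, and your own observation that rank must stay at $(1+o(1))n$ already rules out the tensoring route. The actual construction does not amplify at all. For each $k$-clause, it encodes a coordinate of the form $2\cdot(\text{number of falsified literals}) - (k-1)$, which lies in $\{-(k-1),\dots,k-1\}$ when the clause is satisfied and equals $k+1$ when it is not; this is exactly the source of the gap $\gamma = 1 + 2/(k-1)$ quoted in Remark~\ref{rmk:svpapproxfactor}. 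Implementing the affine offset $-(k-1)$ (and the asymmetric ``$2\alpha_i - 1$'' encoding that pins coefficients to $\{0,1\}$) requires one auxiliary basis vector $b_0$ whose coefficient serves as the constant $1$, so the rank goes from $n$ to $n+1$, well within the allowed slack. Your proposal is missing this auxiliary-coordinate trick entirely, which is what makes the offsets realizable in a lattice at all.

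Finally, a sanity check that your amplification plan had to fail: the resulting $\gamma(\eps) = 1 + 2/(k(\eps)-1)$ \emph{shrinks} toward $1$ as $\eps \to 0$, because SETH forces $k$ to grow. Any amplification scheme would push $\gamma$ in the opposite direction as you invest more gadget rank, so the dependence you would obtain is qualitatively the reverse of what the cited theorem asserts.
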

 
\begin{remark}
\label{rmk:svpapproxfactor}
The precise form of the approximation factor $\gamma$ in~\pref{thm:svpseth} is $\gamma = 1 + 2/(k-1)$, where $k = k(\eps)$ comes from the hard $k$-\emph{SAT} instance from SETH. Since the dependence $k = k(\eps)$ is not known, so is the dependence of $\gamma$ on $\eps$.
\end{remark}

\subsection{Subset Query Problems}
\begin{definition}[Subset Query]
The Subset Query problem is defined as follows. Given $N$ sets $S_1, \ldots , S_N \subset [d]$ and a query $Q \subset [d]$, distinguish between the following two cases:
\begin{itemize}
    \item (YES) There exists $i \in [N]$ such that $Q \subseteq S_i$.
    \item (NO) For all $i \in [N]$, $Q \not\subseteq S_i$.
\end{itemize}
\end{definition}

\begin{definition}[Bichromatic Subset Query]
The Bichromatic Subset Query problem is defined as follows.
Given two collections of sets $S_1, \ldots , S_N \subset [d]$ and $T_1, \ldots, T_N \subset [d]$, distinguish between the following two cases:
\begin{itemize}
    \item (YES) There exists $i, j \in [N]$ such that $T_i \subseteq S_j$.
    \item (NO) For all $i, j \in [N]$, $T_i \not\subseteq S_j$.
\end{itemize}
\end{definition}

It is well-known that Bichromatic Subset Query is SETH-hard (as it is equivalent to Orthogonal Vectors)
due to the following celebrated result of Williams~\cite{williams_new_2005}.
\begin{theorem}[{\cite[Theorem 5.1]{williams_new_2005}}] 
\label{thm:subsetseth}
    Assuming SETH, for any $\delta > 0$, there exists a constant $c = c(\delta)$ such that Bichromatic Subset Query cannot be solved in $N^{2 - \delta}$ time on instances with $d = c \log N$.
\end{theorem}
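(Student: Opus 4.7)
The plan is to give a direct, parameter-preserving reduction from Orthogonal Vectors to Bichromatic Subset Query, and then appeal to the SETH-hardness of Orthogonal Vectors that is already captured by the Orthogonal Vectors Conjecture (\pref{conj:ovc}). Concretely, given an OV instance with sets $A, B \subset \{0,1\}^d$ of size $N$, I would map each $a \in A$ to $S_a := \{\, i \in [d] : a_i = 0 \,\}$ and each $b \in B$ to $T_b := \{\, i \in [d] : b_i = 1 \,\}$. Then $\langle a, b\rangle = 0$ if and only if every coordinate where $b$ is $1$ has $a$ equal to $0$, i.e., $T_b \subseteq S_a$. Hence $(A,B)$ is a YES instance of OV exactly when the collections $\{S_a\}_{a\in A}$ and $\{T_b\}_{b\in B}$ form a YES instance of Bichromatic Subset Query. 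This transformation runs in $O(Nd)$ time, preserves $N$, and preserves the universe $[d]$ verbatim, so the dimension parameter is untouched.

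Combining this with \pref{conj:ovc} finishes the argument: if for some $\delta > 0$ one had a $O(N^{2-\delta})$ algorithm for Bichromatic Subset Query on instances with $d = c \log N$ for every constant $c$, then for the $c = c(\delta)$ promised by \pref{conj:ovc} one would obtain an $O(N^{2-\delta})$ algorithm for OV with $d = c\log N$, contradicting the conjecture and hence SETH. The only subtlety is that \pref{conj:ovc} itself is not immediate from SETH but follows from the Impagliazzo–Paturi–Zane sparsification lemma (to reduce to $k$-SAT with $m = O(n)$ clauses) followed by Williams' split-and-list construction (enumerating all $2^{n/2}$ partial assignments to each half of the variables and creating a clause-indicator vector of length $m = O(n)$). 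Here $N = 2^{n/2}$ and $d = O(n) = O(\log N)$, so a $O(N^{2-\delta})$ algorithm would refute $k$-SAT in $2^{(1-\delta)n}$ time for every $k$.

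The main obstacle, if any, is entirely in the SETH-to-OV half of the argument rather than in the gadget: the sparsification lemma is needed to guarantee the dimension stays linear in $n$ (equivalently, logarithmic in $N$) after the split-and-list step, since the straightforward reduction without sparsification yields $d = \poly(n)$ and thus only $d = \poly\log N$. For the statement of \pref{thm:subsetseth}, the reduction above is otherwise a one-line application of complementation, so there is nothing further to grind through.
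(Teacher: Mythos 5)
Your reduction is correct and is exactly the equivalence the paper has in mind: the paper states \pref{thm:subsetseth} as a citation to Williams and remarks that Bichromatic Subset Query is ``equivalent to Orthogonal Vectors,'' which is precisely the complementation map $S_a = \{i : a_i = 0\}$, $T_b = \{i : b_i = 1\}$ you describe, combined with the SETH-to-OV step (split-and-list after sparsification) that yields \pref{conj:ovc} with $d = c\log N$. Nothing in your argument diverges from the intended route.
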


\section{Proof of Main Theorems}
\label{sec:main}
We give two simple reductions that imply the hardness of $\gamma$-ANN assuming SETH. To this end, we first present a reduction from Bichromatic Closest Pair to ANN. Then, we reduce $\SVP_{\infty}^{\{0,1\}}$ and Bichromatic Subset Query to Bichromatic Closest Pair, thereby showing the conditional quantitative hardness of $\gamma$-ANN. Furthermore, we show that the hard approximation factor of $3$, which our reduction from Bichromatic Subset Query achieves, is the best possible for any ``natural'' reduction from the Orthogonal Vectors problem (See \pref{thm:barrier}). Since most known SETH-based hardness results are obtained via a reduction from Orthogonal Vectors, our result implies that showing hardness for approximation factor larger than $3$ would require new techniques.

\subsection{Bichromatic Closest Pair to ANN}
\label{sec:onlinetooffline}

A standard reduction from \cite{williams_subcubic_2018} shows that a sublinear algorithm for $\gamma$-ANN with polynomial preprocessing time implies a subquadratic algorithm for $\gamma$-approximate Bichromatic Closest Pair. We include a short proof of this fact for completeness.

\begin{lemma} 
\label{lem:bichrom-ann}
    Given numbers $\delta > 0$ and $C > 1$, if there exists an algorithm for $\gamma$-ANN under the $\ell_p$ norm with preprocessing time $N^C$ and query time $N^{1-\delta}$, where $N$ denotes the number of data points, then there exists a $N^{2 - \Omega \left( \frac{\delta}{C - 1} \right) }$-time algorithm for $\gamma$-approximate Bichromatic Closest Pair under $\ell_p$.
\end{lemma}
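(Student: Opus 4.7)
The plan is to use the standard block-partitioning trick: divide $A$ into $N/s$ sub-instances of size $s$, build a separate $\gamma$-ANN data structure on each sub-instance, and then answer the Bichromatic Closest Pair decision by querying all $N$ points of $B$ against all $N/s$ data structures. The parameter $s$ is chosen to balance the preprocessing cost with the total querying cost.

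More concretely, I would first arbitrarily partition $A$ into blocks $A_1,\dots,A_{N/s}$, each of size at most $s$, and run the hypothesized ANN preprocessing on each $A_j$ with radius parameter $r$. By hypothesis this takes time $s^C$ per block, so the total preprocessing over all blocks is at most $(N/s)\cdot s^{C} = N\cdot s^{C-1}$. Next, for every $b\in B$ and every block $A_j$, I would invoke the ANN query on $b$; if any query returns a point within distance $r$ we report YES, and otherwise (since the ANN's NO guarantee rules out any point within distance $r$ up to the $\gamma$ factor) we report NO. The total query cost is at most $N\cdot(N/s)\cdot s^{1-\delta} = N^{2}\cdot s^{-\delta}$.

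The final step is to set $s$ so that $N s^{C-1}=N^{2}s^{-\delta}$, i.e.\ $s = N^{1/(C-1+\delta)}$. Plugging this back yields total running time $N^{2-\delta/(C-1+\delta)} = N^{2-\Omega(\delta/(C-1))}$, matching the claimed bound. One should also verify that the $\gamma$-approximation guarantee is preserved: in the YES case some block $A_j$ contains a point within distance $r$ of some $b$, so its ANN query must return a point within distance $\gamma r$, which correctly triggers YES; in the NO case no block contains any point within distance $r$, so every ANN query must return NO (or a point certified to be at distance at least $\gamma r$).

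There is no real obstacle here; the only thing to be careful about is the edge case $C\to 1$ (where preprocessing is already linear and the lemma is vacuous) and the fact that $\gamma$-ANN returns a point meeting the relaxed distance bound, which is exactly what is needed for the $\gamma$-approximate Bichromatic Closest Pair decision. The balancing is the only quantitative step, and it is a single-variable optimization.
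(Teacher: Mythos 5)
Your proof is correct and is essentially the same block-partitioning reduction the paper uses: partition $A$ into blocks of size $s$, preprocess each, query every $b\in B$ against each block, and balance $N s^{C-1}$ against $N^2 s^{-\delta}$. The only cosmetic difference is that you solve for the balancing point $s = N^{1/(C-1+\delta)}$ exactly, whereas the paper picks $\ell$ from a range parametrized by an auxiliary $\delta'$; both yield the same exponent $2 - \delta/(C-1+\delta) = 2 - \Omega(\delta/(C-1))$.
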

\begin{proof}
Consider the following algorithm for $\gamma$-approximate Bichromatic Closest Pair under the $\ell_p$ norm for two sets $A = \{ p_1, \ldots, p_N \} \subset \Real^d$ and $B = \{ q_1, \ldots , q_N \} \subset \Real^d$.
\begin{itemize}
    \item Divide $A$ into $N/\ell$ batches, each of size $\ell$.
    \item Preprocess each batch separately, constructing $N/\ell$ many data structures.
    \item For each query point $q_j$, query each data structure created by the preprocessing step.
\end{itemize}
The total time used for preprocessing is
\begin{equation*}
     \frac{N}{\ell} \cdot ( \ell )^C = N \cdot \ell^{C - 1}\;.
\end{equation*}
For each query $q_j$, one needs to query $N/\ell$ data structures. Hence, the total query time spent on $q_1, \ldots, q_N$ is
\begin{equation*}
    N \cdot \frac{N}{\ell}\cdot \ell^{1-\delta} = \frac{N^2}{\ell^\delta}.
\end{equation*}
Let $\delta' > 0$ be a number satisfying $\frac{\delta'}{1-\delta'} < \frac{\delta}{C-1}$ and choose $\ell$ such that
\begin{equation*}
     N^{\delta'/\delta} < \ell < N^{\frac{1 - \delta'}{C-1}} \;.
\end{equation*}
This gives a $O(N^{2-\delta'})$-time algorithm for $\gamma$-approximate Bichromatic Closest Pair.
\end{proof}

\subsection{\texorpdfstring{$\SVP_p^{\{0,1\}}$}{SVPp-bool} Hardness implies Hardness of ANN under \texorpdfstring{$\ell_p$}{lp}}
\label{sec:svp-reduction}
In this section, we show that a fast algorithm for $\gamma$-approximate Bichromatic Closest Pair under $\ell_p$ implies a fast algorithm for $\gSVPB_p$. Combined with \pref{lem:bichrom-ann}, this shows that a lower bound for $\gSVPB_p$ implies a lower bound for $\gamma$-ANN under $\ell_p$. Assuming SETH, lower bounds for $\gSVPB_\infty$ are known (\pref{thm:svpseth}). Hence, our reduction implies hardness of $\gamma$-ANN under the $\ell_\infty$ norm.

\begin{lemma}
\label{lem:svp-bichrom}
    If there exists a $f(N)$-time algorithm that solves $\gamma$-approximate Bichromatic Closest Pair under the $\ell_p$ norm (where $1 \le p \le \infty$), then there exists an algorithm for $\gSVPB_p$ with the same ambient dimension that runs in $O \left( f(2^{n/2}) \right)$ time, where $n$ denotes the rank of the input lattice.
\end{lemma}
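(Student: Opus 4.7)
The plan is a meet-in-the-middle enumeration of the $\{0,1\}$-coefficient vectors. Given an instance of $\gSVPB_p$ with basis $\cB = (b_1, \ldots, b_n)$ and threshold $r$, I would split every coefficient vector $\alpha \in \{0,1\}^n$ as $\alpha = (\alpha^{(1)}, \alpha^{(2)})$ with $\alpha^{(1)} \in \{0,1\}^{\lceil n/2 \rceil}$ indexing the first $\lceil n/2 \rceil$ basis vectors and $\alpha^{(2)} \in \{0,1\}^{\lfloor n/2 \rfloor}$ indexing the remaining ones. Define
\[
A = \Bigl\{\, a(\alpha^{(1)}) := \sum_{i=1}^{\lceil n/2 \rceil} \alpha^{(1)}_i b_i \;:\; \alpha^{(1)} \in \{0,1\}^{\lceil n/2 \rceil} \,\Bigr\}
\]
and
\[
B = \Bigl\{\, b(\alpha^{(2)}) := -\sum_{j=1}^{\lfloor n/2 \rfloor} \alpha^{(2)}_j \, b_{\lceil n/2 \rceil + j} \;:\; \alpha^{(2)} \in \{0,1\}^{\lfloor n/2 \rfloor} \,\Bigr\},
\]
both sitting in $\Real^d$, so $|A|, |B| \le 2^{\lceil n/2 \rceil}$. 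The key identity is that $a(\alpha^{(1)}) - b(\alpha^{(2)}) = \sum_{i=1}^n \alpha_i b_i$ for $\alpha = (\alpha^{(1)}, \alpha^{(2)}) \in \{0,1\}^n$, so distances between $A$ and $B$ are exactly the norms of the lattice vectors we care about.

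After constructing $A$ and $B$, I would feed them into the assumed $\gamma$-approximate Bichromatic Closest Pair algorithm at threshold $r$. The YES direction is immediate: a short witness $\alpha \ne \vec 0$ with $\norm{\sum_i \alpha_i b_i}_p \le r$ yields a pair in $A \times B$ at distance at most $r$. For the NO direction, observe that the NO condition of $\gSVPB_p$ quantifies over all nonzero $\alpha \in \mathbb{Z}^n$, hence in particular over $\{0,1\}^n \setminus \{\vec 0\}$, so every non-trivial pair in $A \times B$ has distance at least $\gamma r$. The construction of $A$ and $B$ itself costs $\poly(n,d) \cdot 2^{n/2}$ time, so the whole reduction runs in $O(f(2^{n/2}))$ time, absorbing the factor $2^{\lceil n/2 \rceil} \le 2 \cdot 2^{n/2}$ into standard growth assumptions on $f$.

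The only nuisance, and the main point to handle with care, is the trivial pair $(a(\vec 0), b(\vec 0)) = (\vec 0, \vec 0) \in A \times B$, which always sits at distance $0$ but corresponds to the forbidden $\alpha = \vec 0$. I would resolve this by invoking the Bichromatic Closest Pair algorithm twice rather than once, on $(A, B \setminus \{\vec 0\})$ and $(A \setminus \{\vec 0\}, B)$ respectively, padding the trimmed side with a duplicated point so that both sets share the common size required by \pref{def:offline}. The first call covers every $\alpha$ with $\alpha^{(2)} \ne \vec 0$, the second covers every $\alpha$ with $\alpha^{(1)} \ne \vec 0$, and together they exactly capture the nonzero $\alpha \in \{0,1\}^n$; the reduction outputs YES iff either call does. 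Everything else follows directly from the meet-in-the-middle decomposition, so this bit of bookkeeping is the only step that really needs any attention.
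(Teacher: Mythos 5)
Your proof is correct and takes essentially the same route as the paper: the same meet-in-the-middle split of $\cB$, the same identification of distances in $A-B$ with norms of $\{0,1\}$-coefficient lattice vectors, and the same fix for the zero vector via two calls on $(A, B\setminus\{\vec 0\})$ and $(A\setminus\{\vec 0\}, B)$ whose union covers exactly the nonzero coefficient vectors. The only addition is your remark about padding to equalize set sizes, a minor bookkeeping point the paper glosses over.
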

\begin{proof}
Consider an instance of $\gSVPB_p$ with a lattice basis $\cB \subset \Real^d$ of size $n$. Divide $\cB$ into $\cB_1$ and $\cB_2$, each of size $n/2$, and consider the following sets of vectors.
\begin{align*}
    &~ A_0 := \left\{ \sum_{i} \alpha_i b_i | b_i \in \cB_1, \vec{\alpha} \in \{0,1\}^{n/2} \right\} \\
    &~ A_1 := \left\{ \sum_{i} \alpha_i b_i | b_i \in \cB_1, \vec{\alpha} \in \{0,1\}^{n/2}, \vec{\alpha} \neq \vec{0} \right\} \\
    &~ B_0 := \left\{ - \sum_{i} \alpha_i b_i | b_i \in \cB_2, \vec{\alpha} \in \{0,1\}^{n/2} \right\} \\
    &~ B_1 := \left\{ - \sum_{i} \alpha_i b_i | b_i \in \cB_2, \vec{\alpha} \in \{0,1\}^{n/2}, \vec{\alpha} \neq \vec{0} \right\}.
\end{align*}

Let $\cA$ be an $f(N)$-time algorithm for $\gamma$-approximate Bichromatic Closest Pair. We run $\cA$ two times with the following inputs, $\cA(A_0, B_1)$, $\cA(A_1, B_0)$ and return the OR of the outputs. Since $|A_1|, |B_1| \leq |A_0|, |B_0| \leq N = 2^{n/2}$, then the runtime is $O( f(N) ) = O( f(2^{n/2}))$. It remains to show the correctness of our reduction.

\begin{claim}[Correctness] \label{cl:correctness}
Let $\cL^{\{0,1\}} = \left\{ \sum_{i=1}^n \alpha_i b_i \mid b_i \in \cB,\; \vec{\alpha} \in \{0,1\}^n \right\}$ and $A-B = \{ a - b ~|~ a \in A, b \in B \}$. Then
\begin{equation*}
    (A_0 - B_1) \cup (A_1 - B_0) = \cL^{\{0,1\}}\setminus\{\vec{0}\}\;.
\end{equation*}
\end{claim}
\begin{proof}

First, notice that $A_0 - B_0 = \cL^{\{0,1\}}$. It follows that $A_0 - B_1 = \cL^{\{0,1\}} \setminus A_0$ since $B_1 = B_0 \setminus \{\vec{0}\}$ and $\cB = \cB_1 \cup \cB_2$ is a basis. Similarly, $A_1 - B_0 = \cL^{\{0,1\}} \setminus B_0$. Hence, 
\begin{align*}
    (A_0 - B_1) \cup (A_1-B_0) &= ( \cL^{\{0,1\}} \setminus A_0) \cup ( \cL^{\{0,1\}} \setminus B_0) \\
    &= \cL^{\{0,1\}} \setminus (A_0 \cap B_0) \\
    &= \cL^{\{0,1\}} \setminus \{\vec{0}\}\;.
\end{align*}
\end{proof}

From \pref{cl:correctness}, we know that if there exists a vector in $A_0 - B_1$ or $A_1 - B_0$ with $\ell_p$-norm less than $r$, then there exists a vector in $\cL^{\{0,1\}}$ with $\ell_p$-norm less than $r$. If all vectors in $(A_0 - B_1) \cup (A_1 - B_0)$ have $\ell_p$-norm greater than $\gamma r$, then all non-zero vectors in $\cL^{\{0,1\}}$ must have $\ell_p$-norm greater than $\gamma r$ which concludes our proof.
\end{proof}

\begin{remark}
\label{rmk:cvp-reduction}
A similar reduction can be shown for $\gCVPB_p$, which is the $\CVP$ analogue of $\gSVPB_p$. To see this, take $\cB_0' = \cB_0 + t$ in the proof of Lemma~\ref{lem:svp-bichrom}, where $t \in \mathbb{R}^d$ is a target point which is guaranteed to be close to a $\{0,1\}$-coefficient lattice vector. Then, run the Bichromatic Closest Pair algorithm on the input $(\cA_0,\cB_0')$.
\end{remark}

Combining~\pref{lem:bichrom-ann} and~\pref{lem:svp-bichrom}, we obtain our main theorem which connects hardness of $\gSVPB$ and hardness of $\gamma$-ANN.

\begin{theorem}
    For any $\delta > 0$ and $C > 1$, if there is an algorithm for $\gamma$-ANN under the $\ell_p$ norm with $N^C$ preprocessing time and $N^{1-\delta}$ query time, then there is a $2^{ \left( 1 - \Omega \left(\frac{\delta}{C - 1}\right) \right)n }$-time algorithm for $\gSVPB_p$ with the same ambient dimension, where $n$ denotes the rank of the input lattice. 
\end{theorem}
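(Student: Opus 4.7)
The plan is to prove this theorem by a direct composition of \pref{lem:bichrom-ann} and \pref{lem:svp-bichrom}, both of which are already established. No new ideas are needed; the entire content of the statement is to tabulate the time bound one gets by chaining the two reductions.

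First, I would invoke \pref{lem:bichrom-ann} with the given parameters $\delta$ and $C$. This converts a $\gamma$-ANN data structure with $N^C$ preprocessing and $N^{1-\delta}$ query time into an algorithm for $\gamma$-approximate Bichromatic Closest Pair under $\ell_p$ running in time $f(N) = N^{2 - \Omega(\delta/(C-1))}$. Next, I would feed this $f$ into \pref{lem:svp-bichrom}, which yields an algorithm for $\gSVPB_p$ with the same ambient dimension running in time
\begin{equation*}
O\bigl(f(2^{n/2})\bigr) = O\bigl( (2^{n/2})^{2 - \Omega(\delta/(C-1))} \bigr) = 2^{(1 - \Omega(\delta/(C-1)))n}\;,
\end{equation*}
where $n$ is the rank of the input lattice. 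The constant hidden in the $\Omega(\cdot)$ is the same as in \pref{lem:bichrom-ann} (up to a factor of $1/2$ absorbed into the asymptotic notation), since \pref{lem:svp-bichrom} introduces no loss in the exponent.

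There is essentially no obstacle here: the only thing to double-check is that the ambient dimension is preserved (it is, by the construction in the proof of \pref{lem:svp-bichrom}, since the vectors in $A_0, A_1, B_0, B_1$ live in $\mathbb{R}^d$, the same space as the basis $\cB$), and that the exponent arithmetic $n/2 \cdot (2 - c) = (1 - c/2)n$ indeed gives a factor absorbed into $1 - \Omega(\delta/(C-1))$. Hence the theorem follows immediately from the lemmas.
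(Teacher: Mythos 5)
Your proposal is correct and follows exactly the same route as the paper: compose \pref{lem:bichrom-ann} with \pref{lem:svp-bichrom} and carry out the exponent arithmetic $f(2^{n/2}) = 2^{(1-\Omega(\delta/(C-1)))n}$. Your write-up is, if anything, slightly more careful than the paper's own two-line proof, which has a small typo (writing $\delta$ instead of $\delta'$ in the final exponent).
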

\begin{proof}
    By \pref{lem:bichrom-ann}, we get a $N^{2-\delta'}$-time algorithm for $\gamma$-approximate Bichromatic Closest Pair where $\delta' = \Omega \left( \frac{\delta}{C-1} \right)$ from the algorithm for $\gamma$-ANN. By~\pref{lem:svp-bichrom}, this in turn gives a $2^{ \left(1- \frac{\delta}{2} \right) n}$-time algorithm for $\gSVPB_p$.
\end{proof}

Note that the SETH-hard instance of $\gSVPB_\infty$ in~\pref{thm:svpseth} has ambient dimension $d = O(n)$ by the Sparsification Lemma~\cite{impagliazzo_which_2001}. Hence, we obtain the following lower bound for $\gamma$-ANN under $\ell_\infty$.

\begin{corollary}
    Assuming SETH, for any $\delta > 0$ and $C > 1$, there exists a constant $\gamma = \gamma(\delta, C) \in (1,2)$ such that $\gamma$-ANN under $\ell_\infty$ (where $d = O(\log N)$) cannot be solved with $N^C$ preprocessing and $N^{1-\delta}$ query time. 
\end{corollary}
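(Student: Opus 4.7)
The plan is to apply the preceding theorem contrapositively and feed in the SETH-hardness of $\gamma$-$\SVP_\infty^{\{0,1\}}$ from \pref{thm:svpseth}. Suppose toward contradiction that for some $\delta > 0$ and $C > 1$, every $\gamma \in (1,2)$ admits an algorithm for $\gamma$-ANN under $\ell_\infty$ with $N^C$ preprocessing and $N^{1-\delta}$ query time on instances with $d = O(\log N)$. Letting $\eps := c \cdot \frac{\delta}{C-1}$ for the absolute constant $c$ hidden in the $\Omega(\cdot)$ of the preceding theorem, that theorem yields a $2^{(1-\eps)n}$-time algorithm for $\gamma$-$\SVP_\infty^{\{0,1\}}$ (with the same ambient dimension as the ANN instance) for every $\gamma \in (1, 2)$. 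Now invoke \pref{thm:svpseth} with this particular $\eps$ to obtain a specific $\gamma_\eps \in (1,2)$ for which such an algorithm would refute SETH, and instantiate the hypothesized ANN algorithm with $\gamma := \gamma_\eps$. This gives the desired contradiction, and $\gamma(\delta, C) := \gamma_\eps$ is the required approximation factor.

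The only point that needs checking is the dimension bookkeeping $d = O(\log N)$. The reduction of \pref{lem:svp-bichrom} doubles only the cardinality to $N = 2^{n/2}$ and preserves the ambient dimension $d$ of the lattice basis. The reduction of \pref{lem:bichrom-ann} merely partitions an ANN dataset into batches and does not touch the ambient dimension either. Thus it suffices that the SETH-hard $\SVP_\infty^{\{0,1\}}$ instance produced for \pref{thm:svpseth} has $d = O(n)$. As noted immediately before the corollary, this follows because the Bennett--Golovnev--Stephens-Davidowitz reduction starts from a $k$-SAT instance and, via the Sparsification Lemma~\cite{impagliazzo_which_2001}, may assume a linear number of clauses, so the resulting lattice basis lives in $\Real^{O(n)}$. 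Composed with $N = 2^{n/2}$, this yields $d = O(\log N)$ in the final ANN instance.

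The main obstacle is not really technical but rather notational: one must track that the approximation factor $\gamma$ is passed through both reductions without any loss. This is transparent because both \pref{lem:bichrom-ann} and \pref{lem:svp-bichrom} are exactly gap-preserving in $\gamma$ — they reduce $\gamma$-ANN to $\gamma$-approximate Bichromatic Closest Pair to $\gamma$-$\SVP_p^{\{0,1\}}$ with the identical gap parameter. Consequently $\gamma(\delta, C)$ inherits its value (and its membership in $(1,2)$) directly from $\gamma_\eps$ of \pref{thm:svpseth}, with $\eps$ depending on $\delta$ and $C$ only through the constants inside $\Omega(\delta/(C-1))$. No additional quantitative lemma is required beyond what has already been established.
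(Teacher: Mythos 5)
Your proof is correct and follows the same reasoning that the paper leaves implicit: apply the theorem immediately preceding the corollary in its contrapositive form, set $\eps := c\,\delta/(C-1)$ where $c$ is the constant hidden in the $\Omega(\cdot)$, invoke \pref{thm:svpseth} to obtain $\gamma_\eps \in (1,2)$, and instantiate the hypothesized ANN algorithm at $\gamma = \gamma_\eps$. You also correctly identify the two bookkeeping points needed to make this go through --- that both \pref{lem:bichrom-ann} and \pref{lem:svp-bichrom} are gap-preserving, and that the SETH-hard $\SVP_\infty^{\{0,1\}}$ instances have $d = O(n)$ via the Sparsification Lemma so that $d = O(\log N)$ in the resulting ANN instances.
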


\subsection{Bichromatic Subset Query Hardness implies Hardness of ANN under \texorpdfstring{$\ell_\infty$}{linfty}}

In this section, we show that solving $\gamma$-ANN for $\gamma < 3$ with polynomial preprocessing time requires nearly linear query time unless SETH is false. We first modify Indyk's reduction~\cite{indyk_approximate_2001} from Subset Query to ANN to show that $\gamma$-approximate Bichromatic Closest Pair under $\ell_\infty$ is as hard as Bichromatic Subset Query for any $\gamma < 3$.

\begin{lemma} \label{lem:bisq-bichrom}
    For any $\gamma < 3$, if there exists a $f(N)$-time algorithm for $\gamma$-approximate Bichromatic Closest Pair under the $\ell_\infty$ norm, then there exists a $f(N) + O(dN)$-time algorithm for Bichromatic Subset Query.
\end{lemma}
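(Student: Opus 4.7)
\medskip

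The plan is to modify Indyk's gadget reduction from Subset Query to nearest neighbor search in $\ell_\infty$ so that it produces a distance gap of exactly $3$, which is then enough to separate YES and NO instances of Bichromatic Subset Query via any $\gamma$-approximate Bichromatic Closest Pair algorithm with $\gamma < 3$. The main task is therefore to design a per-coordinate encoding whose $\ell_\infty$ contribution is small in every ``compatible'' case ($k \in S_j$ or $k \notin T_i$) and exactly three times larger in the unique ``bad'' case $k \in T_i \setminus S_j$ that witnesses $T_i \not\subseteq S_j$.

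Concretely, given an instance $(T_1,\ldots,T_N;\; S_1,\ldots,S_N)$ of Bichromatic Subset Query with universe $[d]$, I would construct vectors $a_i, b_j \in \Real^d$ coordinate-wise by setting $a_i[k] = 2$ if $k \in T_i$ and $a_i[k] = 0$ otherwise, and $b_j[k] = 1$ if $k \in S_j$ and $b_j[k] = -1$ otherwise. A routine case analysis on the four possible values of $(\mathbbm{1}[k \in T_i], \mathbbm{1}[k \in S_j])$ then shows that $|a_i[k] - b_j[k]| = 3$ when $k \in T_i \setminus S_j$, and $|a_i[k] - b_j[k]| = 1$ in the three other cases. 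Taking the max over $k \in [d]$ yields
\begin{equation*}
    \norm{a_i - b_j}_\infty \;=\; \begin{cases} 1 & \text{if } T_i \subseteq S_j, \\ 3 & \text{otherwise}. \end{cases}
\end{equation*}

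With this encoding in hand, I would feed $A = \{a_1,\ldots,a_N\}$ and $B = \{b_1,\ldots,b_N\}$ into the assumed $\gamma$-approximate Bichromatic Closest Pair algorithm with threshold $r = 1$. If there exist $i,j$ with $T_i \subseteq S_j$, then the corresponding pair lies at distance $\le r$, triggering the YES output; if no such pair exists, every pair lies at distance $\ge 3 > \gamma r$ for any fixed $\gamma < 3$, triggering the NO output. The total running time is $O(dN)$ for the construction plus $f(N)$ for the single black-box call, giving the desired bound.

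There is no real technical obstacle once the gadget is chosen: correctness reduces to a four-case check per coordinate and a max. The only ``design'' step is picking the two-valued alphabets $\{0,2\}$ and $\{-1,1\}$ (or any affine equivalent) so that the unique bad case $k \in T_i \setminus S_j$ yields a per-coordinate difference of $3$ while the other three cases all yield $1$; this is what forces the tight gap of $3$ and, notably, matches the triangle-inequality barrier discussed later for naive gadget reductions from Orthogonal Vectors.
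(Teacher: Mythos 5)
Your proposal is correct and takes essentially the same approach as the paper: both design a two-valued per-coordinate encoding for each side so that the unique ``violating'' pattern $k \in T_i \setminus S_j$ yields a per-coordinate gap three times larger than the other three patterns, and then take the $\ell_\infty$ max. Your alphabet $\{0,2\}$ vs.\ $\{-1,1\}$ is just an affine rescaling (by $3$, shifted by $-1$) of the paper's $\{0,2/3\}$ vs.\ $\{1/3,1\}$, so the two reductions are identical up to normalization.
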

% For the sake of completeness we attach the proof here.
\begin{proof}
Define functions $f$ and $g$ as
\begin{equation*}
    f(x) = \begin{cases}
    0 & \mbox{if } x = 0 \\
    \frac{2}{3} & \mbox{if } x = 1
    \end{cases}
\end{equation*}
\begin{equation*}
    g(x) = \begin{cases}
    \frac{1}{3} & \mbox{if } x = 0 \\
    1 & \mbox{if } x = 1
    \end{cases}
\end{equation*}

For a set $S \subset [d]$, consider its corresponding vector $\chi_{S} \in \{0,1\}^d$ where $\chi_{S,j} = 1$ iff $j \in S$. Let $F: \{0,1\}^d \to \{ 0, 1 \}^d$ and $G: \{0,1\}^d \to \{ 0, 1 \}^d$ be defined as
\begin{align*}
    & F( T ) = ( f( \chi_{T,1}) , \ldots, f(\chi_{T,d}) ) \\
    & G( S ) = ( g( \chi_{S,1}) , \ldots, g(\chi_{S,d}) )\;.
\end{align*}

Define $D(S,T) := \norm{ G(S) - F(T)}_\infty$. If $T \subseteq S$, then $D(S,T) = \frac{1}{3}$, and $D(S,T) = 1$ otherwise. Then, an instance of Bichromatic Subset Query, $S_1, \ldots, S_N \subset [d]$ and $T_1, \ldots, T_N \subset [d]$, can be reduced to a $\gamma$-approximate Bichromatic Closest Pair instance with $G(S_1), \ldots, G (S_N)$ and $F(T_1), \ldots, F(T_N)$ for $\gamma < 3$. If there exists a pair $G(S_i), F(T_j)$ with $D(S_i,T_j) \leq 1/3$ , then $T_j \subseteq S_i$. If no such pair exists, we know that for any $i,j \in [N]$, $T_j \not\subset S_i$.
\end{proof}

Combining \pref{lem:bichrom-ann}, \pref{lem:bisq-bichrom} and \pref{thm:subsetseth}, we get the following theorem.

\begin{theorem}
\label{thm:3approx}
Assuming SETH, for any $\delta > 0$, $C > 1$, and $\gamma < 3$, there exists $c = c(\delta, C)$ such that $\gamma$-ANN under the $\ell_\infty$ norm cannot be solved with $N^C$ preprocessing and $N^{1-\delta}$ query time on instances with $d = c \log N$.
\end{theorem}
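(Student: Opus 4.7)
The plan is to chain together the three results already established: Lemma~\ref{lem:bichrom-ann}, Lemma~\ref{lem:bisq-bichrom}, and Theorem~\ref{thm:subsetseth}. Suppose for contradiction that for some fixed $\delta > 0$, $C > 1$, and $\gamma < 3$, there is a $\gamma$-ANN algorithm under $\ell_\infty$ with $N^C$ preprocessing time and $N^{1-\delta}$ query time that works on instances with $d = c \log N$ for an arbitrarily large constant $c$. I would first feed this algorithm into Lemma~\ref{lem:bichrom-ann} to obtain an algorithm for $\gamma$-approximate Bichromatic Closest Pair under $\ell_\infty$ running in time $N^{2-\delta'}$, where $\delta' = \Omega(\delta/(C-1))$. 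This reduction is dimension-preserving (it only batches and re-queries the same point set), so the output algorithm still operates in dimension $d = c \log N$.

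Next I would apply Lemma~\ref{lem:bisq-bichrom}, which for $\gamma < 3$ transforms any Bichromatic Subset Query instance with ground set $[d]$ and $N$ sets on each side into a $\gamma$-approximate Bichromatic Closest Pair instance under $\ell_\infty$ on the same ambient dimension $d$ and the same number of points $N$, in additional preprocessing time $O(dN)$. Composing with the algorithm from the previous step yields a Bichromatic Subset Query algorithm running in time $O(N^{2-\delta'}) + O(dN)$, again on instances with $d = c \log N$. Since $d = O(\log N)$, the $O(dN)$ term is absorbed into $N^{2-\delta'}$.

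Finally, I would pick the constant $c = c(\delta, C)$ to be the constant produced by Theorem~\ref{thm:subsetseth} applied to the exponent $\delta'$. By that theorem, SETH rules out any $N^{2-\delta'}$-time algorithm for Bichromatic Subset Query on instances with $d = c \log N$, giving the desired contradiction and proving the theorem.

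There is essentially no obstacle here beyond bookkeeping: the only thing to double-check is that every step in the chain preserves (or only mildly inflates) the dimension, so that the SETH-hard instances of Bichromatic Subset Query at dimension $d = c \log N$ map to ANN instances at the same dimension scaling. Both Lemma~\ref{lem:bichrom-ann} and Lemma~\ref{lem:bisq-bichrom} are manifestly dimension-preserving, so the composition goes through and the exponent $\delta' = \Omega(\delta/(C-1))$ threads correctly through the constant $c(\delta, C)$.
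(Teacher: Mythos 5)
Your proposal is correct and follows exactly the same route as the paper's own proof: chain Lemma~\ref{lem:bichrom-ann}, then Lemma~\ref{lem:bisq-bichrom}, then invoke Theorem~\ref{thm:subsetseth} for the contradiction, with $\delta' = \Omega(\delta/(C-1))$ and $c$ coming from the SETH-hard Subset Query instances. Your extra remark that both reductions preserve the $d = O(\log N)$ dimension scaling is a correct and worthwhile detail to spell out, though the paper treats it implicitly.
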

\begin{proof}
 Suppose there exists some $\delta > 0, C > 1$, and $\gamma < 3$ such that there is an algorithm for $\gamma$-ANN under the $\ell_\infty$ norm with $N^C$ preprocessing time and $N^{1-\delta}$ query time for all $d = O(\log N)$. Then, by \pref{lem:bichrom-ann}, there is a $N^{2-\delta'}$-time algorithm for $\gamma$-approximate Bichromatic Closest Pair under $\ell_\infty$ with $\delta' = \Omega \left( \frac{\delta}{C-1} \right)$ that works for all $d = O(\log N)$. By~\pref{lem:bisq-bichrom}, such an algorithm for Bichromatic Closest Pair implies a $O( N^{2-\delta'} )$-time algorithm for Bichromatic Subset Query applying to all $d = O(\log N)$. Such an algorithm for Bichromatic Subset Query cannot exist unless SETH is false (\pref{thm:subsetseth}).
\end{proof}

\subsection{Limitation of OV-based Hardness for ANN}

We show that the hard approximation factor of 3 in~\pref{thm:3approx} is, in a certain sense, tight. More precisely, we show that any ``natural" reduction from Orthogonal Vectors (which is how most SETH-based hardness results are obtained) to $\gamma$-ANN under any metric cannot show hardness for $\gamma > 3$ due to the triangle inequality. This in turn implies that any improvement to~\pref{thm:3approx} in terms of the approximation factor would require novel techniques for proving SETH-based hardness results. This is formalized in~\pref{thm:barrier}.

\begin{theorem}
\label{thm:barrier}
Let $d > 0$ be a positive integer and $(\cM, D)$ be a metric space. Let $F: \{ 0 , 1 \}^{d} \to \cM$ and $G: \{ 0 , 1 \}^{d} \to \cM$ be embeddings such that for any two strings $S, T \in \{0,1\}^d$,
\begin{itemize}
    \item If $\DISJ(S,T) = 1$, then $D(F(S), G(T)) \le r$.
    \item If $\DISJ(S,T) = 0$, then $D(F(S), G(T)) \ge \gamma r$.
\end{itemize}
Then, it must be the case that $\gamma \leq 3$.
\end{theorem}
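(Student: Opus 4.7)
The plan is to exploit the triangle inequality by finding four carefully chosen strings in $\{0,1\}^d$, three of which form disjoint pairs (YES instances of $\DISJ$) and one that forms a non-disjoint pair (NO instance). The three YES distances, via a triangle-inequality path, will upper bound the NO distance by $3r$, forcing $\gamma \leq 3$.

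Concretely, I would consider the two extreme strings $\vec{0} = (0,\ldots,0)$ and $\vec{1} = (1,\ldots,1)$ in $\{0,1\}^d$ (assuming $d \geq 1$; otherwise the statement is trivial). The key observation is:
\begin{align*}
\DISJ(\vec{0},\vec{0}) = \DISJ(\vec{0},\vec{1}) = \DISJ(\vec{1},\vec{0}) = 1, \qquad \DISJ(\vec{1},\vec{1}) = 0.
\end{align*}
Applying the two hypotheses of the theorem to each of these pairs gives
\begin{align*}
D(F(\vec{0}),G(\vec{0})) \leq r, \quad D(F(\vec{0}),G(\vec{1})) \leq r, \quad D(F(\vec{1}),G(\vec{0})) \leq r,
\end{align*}
while on the other hand $D(F(\vec{1}),G(\vec{1})) \geq \gamma r$.

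The final step is a single application of the triangle inequality in $(\cM,D)$ along the chain $F(\vec{1}) \to G(\vec{0}) \to F(\vec{0}) \to G(\vec{1})$, yielding
\begin{align*}
\gamma r \;\leq\; D(F(\vec{1}),G(\vec{1})) \;\leq\; D(F(\vec{1}),G(\vec{0})) + D(G(\vec{0}),F(\vec{0})) + D(F(\vec{0}),G(\vec{1})) \;\leq\; 3r,
\end{align*}
so $\gamma \leq 3$, as claimed. There is essentially no obstacle here: the whole argument rests on the fact that the all-zeros string is disjoint from every string, which produces three short edges between the embedded images of $\vec{0}$ and $\vec{1}$ on each side, while $\vec{1}$ intersects itself and hence must be far; the triangle inequality then immediately caps the achievable gap at $3$.
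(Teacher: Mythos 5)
Your proof is correct and uses the same core idea as the paper: find three disjoint pairs whose images form a three-edge chain, and use the triangle inequality to bound the distance of the one non-disjoint pair by $3r$. The only cosmetic difference is the choice of witnesses — you use the explicit strings $\vec{0}$ and $\vec{1}$, while the paper fixes the first $d-1$ coordinates of $S$ and $T$ to be disjoint and lets the last bit vary, effectively reducing to a single-coordinate gadget; for $d=1$ the two choices coincide, and in general both yield the same inequality.
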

\begin{proof}
For the sake of contradiction, suppose that there exists such $F$ and $G$ with $\gamma > 3$. For a string $S \subset \{ 0  , 1\}^d$, denote by $S^d$ its $d$-th bit and $S^{< d}$ its restriction to the first $d-1$ bits. Let $S, T$ be two strings such that $\DISJ( S^{< d}, T^{< d}) = 1$. Then, $S^d$ and $T^d$ determines the value of $\DISJ(S, T)$. That is, if $S^{d} = T^{d} = 1$ then $\DISJ(S,T) = 0$. Otherwise, $\DISJ(S,T) = 1$. Now consider restrictions $F' : \{ 0 , 1 \} \to \cM$ and $G' : \{ 0 , 1 \} \to \cM$ of $F$ and $G$, defined as
\begin{align*}
    & F' ( x ) = F( S^{< d}, x )\;, \\
    & G' ( y ) = G( T^{< d}, y )\;.
\end{align*}
By our assumption we have $D( F'(1) , G'(1) ) > \gamma r$, while we have $D( F'(0) , G'(0) ) < r$, $D( F'(1) , G'(0) ) < r$ and $D( F'(0) , G'(1) ) < r$. But this is a contradiction if $\gamma > 3$, since it must be the case
\begin{equation*}
    D( F'(1) , G'(1) ) \leq D( F'(1) , G'(0) ) + D( F'(0) , G'(0) ) + D( F'(0) , G'(1) ) < 3 r
\end{equation*}
due to the triangle inequality. 
\end{proof}

\newpage
\printbibliography

\end{document}